\def \C{\mathcal{C}}
\def \E{\mathcal{E}}
\def \G{\mathcal{G}}
\def \M{\mathcal{M}}
\def \N{\mathcal{N}}
\def \S{\mathcal{S}}
\def \T{\mathcal{T}}
\def \W{\mathcal{W}}
\def \bG {\mathcal{G}_{\rm slow}}
\def \sT {\mathscr{T}}
\def \fa{\mathbf{a}}
\def \fb{\mathbf{b}}
\def \fc{\mathbf{c}}
\def \fc{\mathbf{c}}
\def \fg{\mathbf{g}}
\def \fx{\mathbf{x}}
\def \fy{\mathbf{y}}
\def \fz{\mathbf{z}}
\def \fX{\mathbf{X}}
\def \fY{\mathbf{Y}}
\def \fZ{\mathbf{Z}}
\def \f0{\mathbf{0}}
\definecolor{blau_1a}{RGB}{93,133,195}
\definecolor{blau_2a}{RGB}{0,156,218}
\definecolor{gruen_3a}{RGB}{80,182,149}
\definecolor{gruen_4a}{RGB}{175,204,80}
\definecolor{gruen_5a}{RGB}{221,223,72}
\definecolor{orange_6a}{RGB}{255,224,92}
\definecolor{orange_7a}{RGB}{248,186,60}
\definecolor{rot_8a}{RGB}{238,122,52}
\definecolor{rot_9a}{RGB}{233,80,62}
\definecolor{lila_10a}{RGB}{201,48,142}
\definecolor{lila_11a}{RGB}{128,69,151}
\definecolor{blau_1b}{RGB}{0,90,169}
\definecolor{blau_2b}{RGB}{0,131,204}
\definecolor{gruen_3b}{RGB}{0,157,129}
\definecolor{gruen_4b}{RGB}{153,192,0}
\definecolor{gruen_5b}{RGB}{201,212,0}
\definecolor{orange_6b}{RGB}{253,202,0}
\definecolor{orange_7b}{RGB}{245,163,0}
\definecolor{rot_8b}{RGB}{236,101,0}
\definecolor{rot_9b}{RGB}{230,0,26}
\definecolor{lila_10b}{RGB}{166,0,132}
\definecolor{lila_11b}{RGB}{114,16,133}
\definecolor{mycolor1}{rgb}{0.0, 0.18, 0.39}
\definecolor{mycolor2}{RGB}{87,108,67}
\definecolor{mycolor3}{RGB}{8,133,161}
\definecolor{mycolor4}{RGB}{80,91,161}
\definecolor{mycolor5}{RGB}{98,122,157}
\definecolor{mycolor6}{RGB}{255,163,67}
\definecolor{mycolor7}{RGB}{152,205,225}
\definecolor{mycolor8}{RGB}{242,204,48}
\definecolor{mycolor9}{rgb}{0,.5,0}
\definecolor{mycolor10}{rgb}{.59,.44,.09}
\definecolor{mycolor11}{RGB}{231,199,31} 
\definecolor{mycolor12}{RGB}{8,133,161} 
\definecolor{mycolor13}{RGB}{157,188,64} 
\definecolor{mycolor14}{RGB}{194,150,130} 
\definecolor{mycolor15}{RGB}{98,122,157} 
\definecolor{mycolor16}{RGB}{160,160,160} 
\definecolor{mycolor17}{RGB}{115,82,68} 
\definecolor{mycolor18}{RGB}{94,60,108} 
\definecolor{mycolor19}{RGB}{115,82,68} 
\definecolor{mycolor20}{RGB}{255,183,30} 
\pgfplotsset{compat=1.15}
\let\svtikzpicture\tikzpicture
\def\tikzpicture{\noindent\svtikzpicture}
\title{Deterministic $K$-identification For Slow Fading Channel}
\author[1]{Mohammad Javad Salariseddigh}
\author[2]{Muris Spahovic}
\author[3]{Christian Deppe}
\affil[1]{Institute For Communication Engineering, Technical University of Munich, Munich, 80333, Germany, E-mail: mjss@tum.de}
\affil[2]{Institute For Communication Engineering, Technical University of Munich, Munich, 80333, Germany, E-mail: ge25ron@mytum.de}
\affil[3]{Institute For Communication Engineering, Technical University of Munich, Munich, 80333, Germany, E-mail: christian.deppe@tum.de}
\begin{document}

\begin{frontmatter}
\maketitle

\begin{abstract}
Deterministic $K$-identification (DKI) is addressed for Gaussian channels with slow fading (GSF), where the transmitter is restricted to an average power constraint and channel side information is available at the decoder. We derive lower and upper bounds on the DKI capacity when the number of identifiable messages $K$ may grow sub-linearly with the codeword length $n$. As a key finding, we establish that for deterministic encoding, assuming that the number of identifiable messages $K = 2^{\kappa \log n}$ with $\kappa \in [0,1)$ being the identification target rate, the codebook size scales as $2^{(n\log n)R}$, where $R$ is the coding rate.
\keywords{Channel capacity, deterministic $K$-identification, slow fading channels, super exponential growth, and channel side information}
\end{abstract}
\end{frontmatter}
 
\section{Introduction}
Modern communications within the scope of future generation wireless networks (XG) \cite{6G+,6G_PST} require the transfer of extensive amount of data in wireless communication, including smart applications for internet of things \cite{DHV17}, cellular communication, sensor networks, etc. One of the basic and abstract models for wireless communication systems is the fading channel \cite{LYHW18,TV05}. Unlike the fast fading setting, where the coherence time of the channel is small relative to the latency requirement of the application \cite{Biglieri98,TV05}, in the slow fading regime, the latency is short compared to the coherence time \cite{Biglieri98,TV05}. In some appliances, the receiver may acquire channel side information (CSI) by instantaneous estimation of the channel parameters \cite{GV97,AF12}.

Numerous applications of future generation wireless networks (XG) \cite{6G+,6G_PST} are linked with event-triggered communication systems. In such systems, Shannon's message transmission capacity, as studied in \cite{S48}, is not the appropriate metric for the performance evaluation, instead, the identification capacity is deemed to be an essential quantitative measure. In particular, for object-finding or event-detection scenarios, where the receiver aims to determine the presence of an object or determine the occurrence of an specific event in terms of a reliable Yes\,/\,No answer, the so-called identification capacity is the key applicable performance measure \cite{AD89}. 

The original coding scheme for the identification problem introduced by Ahlswede and Dueck \cite{AD89} employs a randomized encoder, where the codewords are tailored according to distributions. The codebook size for randomized identification (RI) grows double-exponentially in the codeword length $n$, i.e., $\sim 2^{ 2^{nR}}$ \cite{AD89}, where $R$ is the coding rate. Realization of RI codes entails high complexity and is challenging for the applications; cf. \cite{Salariseddigh22}. Ahlswede and Dueck were inspired to introduce RI by the work of J\'aJ\'a \cite{J85} who considered an deterministic identification (DI) in communication complexity \cite{Yao79,LL89,AA20}. This problem can be also considered in the communication setting. Here, the codewords are selected via a deterministic function from the messages. DI may be preferred over RI in complexity-constrained applications of MC systems, where the generation of random codewords could be challenging. The DI for discrete memoryless channels (DMCs) with average power constraint is studied in \cite{Salariseddigh_ICC,Salariseddigh_IT} where the codebook size grows exponentially in the codeword length \cite{AD89,Salariseddigh_ICC}. Furthermore, the DI for continuous alphabet channels including Gaussian channels with fast and slow fading and the memoryless discrete-time Poisson channel (GSF) is addressed in \cite{Salariseddigh_ITW,Salariseddigh_IT,Salariseddigh_GC_IEEE} where a new observation regarding the codebook size is reported, namely, it scales super-exponentially with the codeword length $n$, i.e., $\sim 2^{(n\log n)R}$.

In the (standard) identification problem \cite{AD89}, the receiver is interested in a \emph{single} message which we refer to as the target message in the rest of the paper. However, for the $K$-identification problem \cite{AH08G}, the receiver aims to determine the presence of a single message within a set of messages referred to as the target message set\footnote{\,\textcolor{mycolor5}{For instance, the $K$-identification scenario may be used whenever a person aims to determine whether a winner is among their favourite teams or within the context of lottery prize; when people seek to know if a lottery number is among their collection of numbers.}}. The $K$-identification scenario may be understood as the generalization of the original identification problem within this interpretation: the target message (singleton) is substituted with a set of more than one element with size $K$. The first result for $K$-identification is derived by Ahlswede for a DMC $\W$ with randomized encoder setting as follows: Assume that $K=2^{\kappa n}$, then the set of all achievable coding and target identification rate pairs, i.e., $(R,\kappa)$ with a codebook of double exponentially large, i.e., $M=2^{2^{nR}}$, contains $\left\{ (R, \kappa) : 0 \leq R, \kappa; R + 2\kappa \leq \mathbb{C}_{RI}(\W,M,K) \right\}$; see \cite[Th.~1]{AH08G}. To the best of the authors' knowledge, the fundamental performance limits of DKI for the Gaussian channels has not been studied in the literature, yet.
\subsection{Contributions}
In this paper, we consider identification systems employing deterministic encoder and receivers that are interested to accomplish the $K$-identification task, namely, finding an object in a target message set of size $K$ where $K=2^{\kappa \log n}$ for $\kappa \in [0,1)$ scales sub-linearly in the codeword length $n$. We assume that the noise is additive Gaussian and the signal experiences slow fading process. Further, we assume that the channel side information (CSI) is available at the decoder. We formulate the problem of DKI over the GSF under average power constraint which account for the restricted signal energy in the transmitter. As our main objective, we investigate the fundamental performance limits of DKI over the slow fading channel. In particular, this paper makes the following contributions:
\begin{itemize}
    \item[$\blacklozenge$] \textbf{\textcolor{blau_2b}{Generalized Identification Model}}: In several identification systems, often the size of target message set $K$ can be large, particularly when one by one comparison is not demanded due to the delay constraint. In addition, the value of $K$ may increases with the codeword lengths $n$. To do so, we consider a generalized identification model that captures the standard channel (i.e., $K=1$), identification channels with constant $K>1$, and identification channels for which $K$ increases with the codeword length $n$. To the best of the authors' knowledge, such a generalized deterministic identification model has not been studied in the literature, yet.
    \item[$\blacklozenge$] \textbf{\textcolor{blau_2b}{Codebook Scale}}: We establish that the codebook size of DKI problem over the Gaussian channels with slow fading for deterministic encoding scales in $n$ similar to the DI problem ($K=1$) \cite{Salariseddigh_ITW,Salariseddigh_IT}, namely super-exponentially in the codeword length ($\sim 2^{(n\log n)R}$), even when the size of target message set scale as $K=2^{\kappa \log n}$ for any $\kappa \in [0,1)$, which we refer to as the target identification rate. This observation suggests that increasing the number of target messages does not change the scale of the codebook derived for DI over the Gaussian channels \cite{Salariseddigh_ITW}.
    \item[$\blacklozenge$] \textbf{\textcolor{blau_2b}{Capacity Bounds}}: We derive DKI capacity bounds for the slow fading channel with constant $K \geq 1$ and growing size of the target message set $K = 2^{\kappa \log n}$, respectively. We show that for constant $K$, the proposed lower and upper bounds on $R$ are independent of $K$, whereas for growing number of target messages, they are functions of the target identification rate $\kappa$.
    \item[$\blacklozenge$] \textbf{\textcolor{blau_2b}{Technical Novelty}}: To obtain the proposed lower bound, the existence of an appropriate sphere packing within the input space, for which the distance between the centers of the spheres does not fall below a certain value, is guaranteed. This packing incorporates the effect of number of target messages as a function of $\kappa$. In particular, we consider the packing of hyper spheres inside a larger large hyper sphere, whose radius grows in both the codeword length $n$ and the target identification rate $\kappa$, i.e., $\sim n^{\frac{1+\kappa}{4}}$. For derivation of the upper bound, we assume that for given sequences of codes with vanishing error probabilities, a certain minimum distance between the codewords is asserted, where this distance depends on the target identification rate and decreases as $K$ grows.
\end{itemize}
\subsection{Organization}
The remainder of this paper is structured as follows. In Section~\ref{Sec.SysModel}, system model is explained and the required preliminaries regarding DKI codes are established. Section~\ref{Sec.Res} provides the main contributions and results on the message $K$-identification capacity of the slow fading channel. Finally, Section~\ref{Sec.Summary} of the paper concludes with a summary and directions for future research.

\subsection{Notations}
We use the following notations throughout this paper: Blackboard bold letters $\mathbbmss{K,X,Y,Z},\ldots$ are used for alphabet sets. Lower case letters $x,y,z,\ldots$ stand for constants and values of random variables, and upper case letters $X,Y,Z,\ldots$ stand for random variables. The set of consecutive natural numbers from $1$ through $M$ is denoted by $[\![M]\!]$. The set of whole numbers is denoted by $\mathbb{N}_{0} \triangleq \{0,1,2,\ldots\}$. The set of real and non-negative numbers are denoted by $\mathbb{R}$ and $\mathbb{R}_{+}$, respectively. The distribution of a real random variable $X$ is specified by a cumulative distribution function (cdf) $F_X(x) = \Pr(X\leq x)$ for $x\in\mathbb{R}$, or alternatively, by a probability density function (pdf) $f_X(x)$, when it exists. Lower case bold symbol $\fx$ and $\fy$ stand for row vectors. A random sequence $\fX$ and its distribution $F_{\fX}(\fx)$ are defined accordingly. All logarithms and information quantities are for base $2$. The gamma function for non-positive integer $x$ is denoted by $\Gamma(x)$ and is defined as $\Gamma (x) = (x-1) !$, where $(x-1)! \overset{\text{\tiny def}}{=} (x-1) \times (x-2) \times \dots \times 1$. The $\ell_2$-norm and $\ell_{\infty}$-norm of vector $\fx$ are denoted by $\norm{\mathbf{x}}$ and $\norm{\mathbf{x}}_{\infty}$, respectively. Furthermore, we denote the $n$-dimensional hyper sphere of radius $r$ centered at $\fx_0$ with respect to the $\ell_2$-norm by $\S_{\fx_0}(n,r) = \{\fx\in\mathbb{R}_{+}^n : \norm{\fx-\fx_0} \leq r \}$. We use $\mathbf{0} = (0,\ldots,0)$ to represent coordination of the origin. The closure of a set $A$ is denoted by $\text{cl}(A)$. We denote the GSF with $K$ number of target messages by $\bG$.
\section{System Model and Preliminaries}
\label{Sec.SysModel}
In this section, we present the adopted system model and establish some preliminaries regarding DKI coding.
\subsection{System Model}
We consider an identification-focused communication setup, where the decoder seeks to accomplish the following task: Determining whether or not an specific message belongs\footnote{\,\textcolor{mycolor5}{We assume that the transmitter does not know which specific $K$ messages the decoder is interested in. This assumption is justified by the fact that otherwise, entire communication setting is specialized to transmission of only one indicator bit between Alice and Bob.}} to a set of messages called target message set; see Figure~\ref{Fig.E2E_Chain}. We assume that the signal experiences an additive Gaussian noise and slow fading process.
\begin{figure}[H]
    \centering
	\scalebox{1.05}{
\tikzstyle{farbverlauf} = [ top color=white, bottom color=white!80!gray]
\tikzstyle{l} = [draw, -latex']
\tikzstyle{block1} = [draw, top color = white, middle color = cyan!30, rectangle, rounded corners, minimum height=2em, minimum width=2.5em]
\tikzstyle{block2} = [draw, top color = white, middle color = cyan!30, rectangle, rounded corners, minimum height=2em, minimum width=2.5em]
\tikzstyle{block3} = [draw, top color = white, middle color = orange_6b!30, rectangle, rounded corners, minimum height=1em, minimum width=2em]
\tikzstyle{input} = [coordinate]
\tikzstyle{sum} = [draw, circle,inner sep=0pt, minimum size=5mm,  thick]
\tikzstyle{product} = [draw, circle,inner sep=0pt, minimum size=5mm,  thick]
\tikzstyle{arrow}=[draw,->]
\begin{tikzpicture}[auto, node distance=2cm,>=latex']
\node[] (M) {$i$};
\node[block1, right=.5cm of M] (enc) {Enc};
\node[product, right=1.3cm of enc] (channel1) {$\times$};
\node[block3, above=1.1cm of channel1] (fade) {$f_G$};
\node[sum, right=1.3cm of channel1] (channel2) {$+$};
\node[block2, right=1cm of channel2] (dec) {Dec};
\node[right=.5cm of dec] (Output) {\text{\small Yes\,/\,No}};
\node[below=.5cm of dec] (Target) {$j_1,\ldots,j_K$};
\node[above=.7cm of channel2] (noise) {$Z_t$};
\draw[->] (M) -- (enc);
\draw[->] (enc) --node[above]{$\small c_{i,t}$} (channel1);
\draw[->] (channel1) -- (channel2);
\draw[->] (fade) --node[right]{$G$} (channel1);
\path [dashed,l] (fade) -| node [text width=2.5cm,above] {$G$} (dec) ;
\draw[->] (noise) -- (channel2);
\draw[->] (channel2) --node[above]{$Y_t$} (dec);
\draw[->] (Target) --(dec);
\draw[->] (dec) -- (Output);
\end{tikzpicture}
}
	\caption{End-to-end transmission chain for DKI communication in a wireless communication system modelled as a GSF. The transmitter maps message $i$ onto a codeword $\fc_i = (c_{i,1},\ldots,c_{i,n})$. The receiver is provided with an arbitrary target message set $\mathbbmss{K} = \{j_1,\ldots,j_K\}$, and given the channel output vector $\fY$, it asks whether the sent message $i$ belong to set of $K$ messages $\{j_1,\ldots,j_K\}$ or not.}
    \label{Fig.E2E_Chain}
\end{figure}
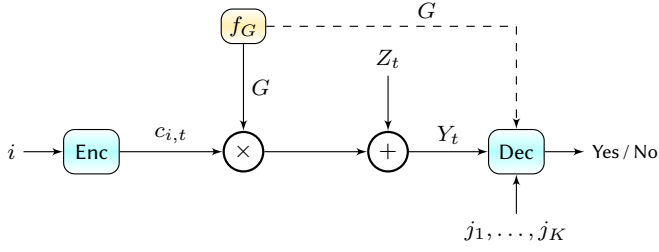
To attain this objective, a coded communication between the transmitter and the receiver over $n$ channel uses of a Gaussian channel with slow fading is established\footnote{\,\textcolor{mycolor5}{The proposed performance bounds works regardless of whether or not an specific code is used for communication, although proper codes may be required to approach such performance limits.}}. We consider the slow fading channel $\bG$ which arises as a channel model in the context of wireless communication \cite{TV05} where the input-output relation is given by
\begin{align}
    Y_t = Gx_t + Z_t \;,\,
\end{align}
where $G_t = G \sim f_{G}$ is a continuous random variable $\sim f_G(g)$, and the noise sequence $\bar{\fZ} \overset{\text{\scriptsize i.i.d.}}{\sim} \mathcal{N}\big(0,\frac{\sigma_Z^2}{n}\big)$ where $\sigma_Z^2>0$ is bounded away from zero. We assume that $G$ has finite expectation and variance $\text{var}(G)>0$. Further, assume that the values of $G$ belong to a set $\G$ where $\gamma \overset{\text{\tiny def}}{=} \underset{G \in \G}{\inf} \, |G| $, that is, the set $\G$ has a constant infimum or equivalently, the fading coefficients are bounded away from zero, i.e., $|G_t| > \gamma \,, \forall t \in [\![n]\!]$ with probability $1$.

The average power constraint on the codewords is
\begin{align}
    \label{Ineq.Const_X}
    \frac{1}{n} \norm{\fx}^2 \leq P_{\,\text{avg}} \,,\,
\end{align}
, where $P_{\,\text{avg}} > 0$ constrain the energy of codeword over the entire $n$ channel uses.
\subsection{DI Coding For The GSF}
The definition of a DKI code for the GSF $\bG$ is given below.
\begin{definition}[Slow Fading DKI code]
\label{Def.ISI-Poisson-Code}
An $(n,\allowbreak M(n,R),\allowbreak K(n,\allowbreak \kappa), \allowbreak e_1, \allowbreak e_2)$ DKI code for a GSF $\bG$ under average power constraint of $P_{\,\text{ave}}$, and for integers $M(n,R)$ and $K(n,\kappa)$, where $n$ and $R$ are the codeword length and coding rate, respectively, with CSI at the decoder is defined as a system $(\C,\sT_{\mathbbmss{K}})$, which consists of a codebook $\C=\{ \mathbf{c}_i \}_{i\in[\![M]\!]} \subset \mathbb{R}^n$, such that
\begin{align}
    \frac{1}{n} \norm{\fc_i}^2 \leq P_{\,\text{avg}} \,,\,
\end{align}
$\forall i\in[\![M]\!]$ and a decoder
\begin{align}
    \label{Eq.DefUnion}
    \sT_{\mathbbmss{K}} = \bigcup_{j \in \mathbbmss{K}} \mathbbmss{T}_{j,g} \;,\,
\end{align}
where $\mathbbmss{T}_{j,g} \subset \mathbb{R}^n$, for $j \in [\![M]\!]$, $g \in \G$, and $\mathbbmss{K} \in \binom{M}{K}$\footnote{\,\textcolor{mycolor5}{We recall that $\binom{M}{K}$ is the family of all subsets of $[\![M]\!]$ with size $K$ and DKI code definition applies to every possible choice of set $\mathbbmss{K}$ with $K$ arbitrary} \textcolor{mycolor5}{messages from the original message set $[\![M]\!]$.}}.
\begin{figure}[!tb]
\centering
\scalebox{1}{
\begin{tikzpicture}[scale=.6][thick]

\draw[dashed] (0.07,.5) circle (3.5cm);

\draw (0,3.6) circle (.1cm);
\draw (3,1.5) circle (.1cm);
\draw (2,-.1) circle (.1cm);
\draw (-2,2) circle (.1cm);
\draw (-.1,.6) circle (.1cm);
\draw (1.4,-1.4) circle (.1cm);
\draw (-1.8,-.8) circle (.1cm);

\node at (0,3.2) {$\fc_2$};
\draw[dashed] (0,3.6) circle (0.2cm);
\draw [fill=mycolor4, fill opacity=0.7] (0,3.6) circle (.1cm);

\node at (3,1.1) {$\fc_3$};
\draw[dashed] (3,1.5) circle (0.2cm);
\draw [fill=mycolor4, fill opacity=0.7] (3,1.5) circle (.1cm);

\node at (2,-.5) {$\fc_4$};

\node at (-2,1.6) {$\fc_1$};
\node at (-.1,.2) {$\fc_5$};

\node at (1.4,-1.8) {$\fc_6$};
\draw[dashed] (1.4,-1.4) circle (.2cm);
\draw [fill=mycolor4, fill opacity=0.7] (1.4,-1.4) circle (.1cm);

\node at (-1.8,-1.2) {$\fc_7$};

\draw[->] (0,4.65) -- ++(0,1.5)  node [fill=white,inner sep=3pt](a){$\text{input space}$};

\draw[->] (-2.3+15.07,4.55) -- ++(0,1.5)  node [fill=white,inner sep=3pt](a){output space};

\draw[dashed] (-0.3+13.07,.4) circle (3.5cm);

\draw (-0.3+14.50,-.5) circle (1.2cm); 
\draw [fill=gray!40, fill opacity=0.7] (-0.3+14.50,-.5) circle (1.2cm);
\node at (-0.1+14.50,-.5) {$\mathbbmss{T}_{5}$};

\draw (-4.3+15.50,1.5) circle (1.2cm);    
\draw [fill=gray!40, fill opacity=0.7] (-4.3+15.50,1.5) circle (1.2cm);
\node at (-4.3+15.30,1.5) {$\mathbbmss{T}_{1}$};

\draw (-2.3+15.07,-1.4) circle (1.2cm);    
\draw [fill=gray!40, fill opacity=0.7] (-2.3+15.07,-1.4) circle (1.2cm);
\node at (-2.3+15.07,-1.6) {$\mathbbmss{T}_{6}$};

\draw (-3.3+14.50,-.4) circle (1.2cm);    
\draw [fill=gray!40, fill opacity=0.7] (-3.3+14.50,-.4) circle (1.2cm);
\node at (-3.3+14.30,-.6) {$\mathbbmss{T}_{7}$};

\draw (-2.3+15.07,2.4) circle (1.2cm);     
\draw [fill=cyan!40, fill opacity=0.7] (-2.3+15.07,2.4) circle (1.2cm);
\node at (-2.3+15.07,2.6) {$\mathbbmss{T}_{2}$};

\draw (-0.3+14.50,1.4) circle (1.2cm);     
\draw [fill=cyan!40, fill opacity=0.7] (-0.3+14.50,1.4) circle (1.2cm);
\node at (-0.3+14.70,1.4) {$\mathbbmss{T}_{3}$};

\draw (-2.4+15.07,.5) circle (1.2cm);    
\draw [fill=cyan!40, fill opacity=0.7] (-2.4+15.07,.5) circle (1.2cm);
\node at (-2.4+15.07,.5) {$\mathbbmss{T}_{4}$};

\path (0.2,3.7) edge [-> , thick, mycolor9, bend left] node [sloped,midway,above,font=\small] {correct identification}(13,3);

\path (3.2,1.6) edge [-> , thick, rot_8b, bend left] node [sloped,midway,above,font=\small] {type I error}(11.0,2.0);

\path (1.6,-1.3) edge [-> , thick, rot_9b, bend left] node [sloped,midway,above,font=\small] {type II error}(-3.4+16,0.1);

\end{tikzpicture}
}
\captionsetup{justification=justified}
\caption{Illustration of a deterministic 3-identification setting with target message set $\mathbbmss{K} = \{2,3,4\}$. In the correct identification scenario, channel output is observed in the union of individual decoder $\mathbbmss{T}_{j,g}$ where $j$ belongs to the target message set. Type I error occurs if the channel output is detected in the complement of union of individual decoders for which the index of codeword at the lest belongs to. The case where the index of codeword at the left does not match to any of the individual decoders for which the channel output belongs to the their union, is referred to as the type II error.}
\label{Fig.DKI-Code}
\end{figure}
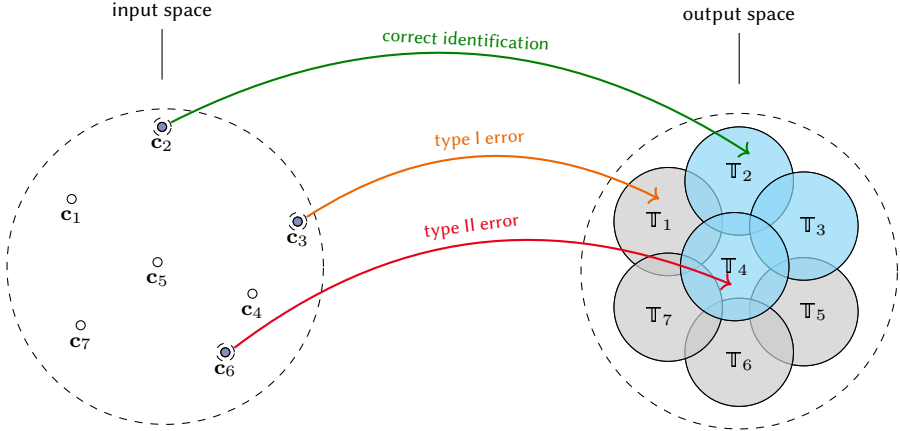
Given a message $i\in [\![M]\!]$, the encoder transmits $\mathbf{c}_i$, and the decoder's aim is to answer the following question: Was a desired message $j \in \mathbbmss{K}$ sent or not? There are two types of errors that may occur (see Figure~\ref{Fig.DKI-Code}): Rejection of the true message for $i\in \mathbbmss{K}$ (type I), or acceptance of a false message for $i \notin \mathbbmss{K}$ (type II). The corresponding error probabilities of the DKI code $(\C,\sT_{\mathbbmss{K}})$ are given by
\begin{align}
    \label{Eq.TypeIError}
    & P_{e,1}(i) = \sup_{g\in\G} \left[ \Pr \left( \fY \in \sT_{\mathbbmss{K}}^c \,\big|\, \fx = \fc_i \right) \right]_{i \in \mathbbmss{K}} = \sup_{g\in\G} \left[ 1 - \int_{\sT_{\mathbbmss{K}}} f_{\fZ} (\fy - g \fc_i) d\fy \right]_{i \in \mathbbmss{K}}
    \\
    & P_{e,2}(i,\mathbbmss{K}) = \sup_{g\in\G} \left[ \Pr \left( \fY \in \sT_{\mathbbmss{K}} \,\big|\, \fx = \fc_i \right) \right]_{i \notin \mathbbmss{K}} =
    \sup_{g\in\G} \left[\int_{\sT_{\mathbbmss{K}}} f_{\fZ} (\fy - g \fc_i) d\fy \right]_{i \notin \mathbbmss{K}}
    \label{Eq.TypeIIError}
\end{align}
where
\begin{align}
    f_{\fZ}(\fz) & = f_{\fZ}(\fy - g \fc_i)
    \nonumber\\
    & = \prod_{t=1}^n f_{\fZ}(y_t-gc_{i,t})
    \nonumber\\&
    = \prod_{t=1}^n \frac{1}{(2\pi\sigma_Z^2)^{1/2}} e^{-z_t^2/2\sigma_Z^2}
    \nonumber\\&
    = \frac{1}{(2\pi\sigma_Z^2)^{n/2}} e^{-\norm{\fz}^2/2\sigma_Z^2}
    \,,\,
\end{align}
(see Figure~\ref{Fig.E2E_Chain}) and satisfy the following bounds $P_{e,1}(i) \leq e_1 \,,\, \forall i \in \mathbbmss{K}$ and $P_{e,2}(i,\mathbbmss{K}) \leq e_2 \,,\, \forall i \notin \mathbbmss{K}$, where $\mathbbmss{K} \in \binom{M}{K}$ and every $e_1, e_2>0$.

A rate $R>0$ is called achievable if for every $e_1, \allowbreak e_2>0$ and sufficiently large $n$, there exists an $(n,\allowbreak M(n\allowbreak,R),\allowbreak K(n,\allowbreak \kappa), \allowbreak e_1, \allowbreak e_2)$ DKI code. The DKI capacity of the GSF $\bG$ is defined as the supremum of all achievable rates, and is denoted by $\mathbb{C}_{DI}(\bG,M,K)$.
\end{definition}
\begin{remark}
\label{FadingCoeffsZero}
If the fading coefficients can be zero or arbitrarily close to zero, i.e., $0\in \text{cl}(\G)$, then it immediately
follows that the DKI capacity is zero. To see this, observe that if $0\in \text{cl}(\G)$, then
\begin{align}
     P_{e,1}(i) + P_{e,2}(i,\mathbbmss{K}) & = \sup_{g\in\G} \left[ 1- \int_{\sT_{\mathbbmss{K}}} f_{\fZ} (\fy - g \fc_i) \, d\fy \right] + \sup_{g\in\G} \left[ \int_{\sT_{\mathbbmss{K}}} f_{\fZ} (\fy - g \fc_i) \, d\fy \right]
     \nonumber\\&
     \geq \left[ 1- \int_{\sT_{\mathbbmss{K}}} f_{\fZ} (\fy - g \fc_i) \, d\fy \right]_{\substack{g=0,\\i \in \mathbbmss{K}}} + \left[ \int_{\sT_{\mathbbmss{K}}} f_{\fZ} (\fy - g \fc_i) \, d\fy
     \right]_{\substack{g=0,\\i \notin \mathbbmss{K}}}
     \nonumber\\&
     = 1 \;.\,
     \label{Eq.FadingCoeffsZero}
\end{align}
\end{remark}
\section{DKI Capacity of the GSF}
\label{Sec.Res}

In this section, we first present our main results, i.e., lower and upper bounds on the achievable identification rates for the GSF. Subsequently, we provide the detailed proofs of these bounds.

\subsection{Main Results}
The DKI capacity theorem for GSF $\bG$ is stated below.
\begin{theorem}
\label{Th.DKI-Capacity}
Consider the GSF $\bG$ and assume that the fading coefficients are bounded away from zero, i.e., $0 \notin \text{cl}(\G)$. Further, assume that the number of target messages scales sub-linearly with codeword length $n$, i.e., $K(n,\kappa) = 2^{\kappa \log n}$, where $\kappa \in [0,1)$. Then the DKI capacity of $\bG$ subject to average power constraint of the form $\norm{\fc_i}^2 \leq nP_{\,\text{ave}}$ and a codebook of super-exponential scale, i.e., $M(n,R)=2^{(n\log n)R}$, is bounded by
\begin{align}
    \label{Ineq.LU}
    \frac{1-\kappa}{4} \leq \mathbb{C}_{DI}(\bG,M,K) \leq 1 + \kappa \,.\,
\end{align}
\end{theorem}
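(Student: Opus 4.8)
The plan is to establish the lower and upper bounds of~\eqref{Ineq.LU} separately. For the \emph{lower bound (achievability)}, fix any $R<\tfrac{1-\kappa}{4}$ and any $e_1,e_2>0$; the task is to produce, for all large $n$, an $(n,2^{(n\log n)R},2^{\kappa\log n},e_1,e_2)$ DKI code. I take the codebook to be the set of centers of a packing of $M=2^{(n\log n)R}$ disjoint $n$-dimensional hyperspheres of a common radius $r_{1,n}$ inside $\S_{\mathbf{0}}(n,\rho_n)$, where the outer radius $\rho_n\sim n^{(1+\kappa)/4}$ is chosen small enough that every center automatically obeys $\tfrac1n\norm{\fc_i}^2\le P_{\,\text{avg}}$ (which is possible because $(1+\kappa)/2<1$), and where $r_{1,n}$ is chosen so that $\rho_n/r_{1,n}\sim n^{(1-\kappa)/4}$; an elementary greedy/volume packing bound then shows this leaves just enough room for $M$ centers, with pairwise distances $\norm{\fc_i-\fc_j}\ge d_n:=2r_{1,n}\sim n^{\kappa/2}$. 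The decoder is taken to be $\mathbbmss{T}_{j,g}=\S_{g\fc_j}(n,r_n)$ (a ball of radius $r_n$ about the scaled codeword), with $r_n$ set slightly above the typical noise norm.

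For the type-I error I exploit the CSI: for $i\in\mathbbmss{K}$ the output $g\fc_i+\fZ$ lies in $\mathbbmss{T}_{i,g}\subseteq\sT_{\mathbbmss{K}}$ unless $\norm{\fZ}>r_n$, so a $\chi^2$-concentration estimate for $\norm{\fZ}^2$ gives $P_{e,1}(i)\le e_1$ once $n$ is large. For the type-II error I union-bound over the $K=2^{\kappa\log n}=n^{\kappa}$ target messages,
\[
P_{e,2}(i,\mathbbmss{K})\;\le\;\sum_{j\in\mathbbmss{K}}\,\sup_{g\in\G}\Pr\!\big(\,g\fc_i+\fZ\in\S_{g\fc_j}(n,r_n)\,\big)\,,
\]
and since $\norm{g\fc_i-g\fc_j}\ge\gamma\,d_n$ for every admissible $g$ (this is where the standing assumption $0\notin\text{cl}(\G)$, equivalently $\gamma>0$, is essential; without it the capacity is zero), each summand forces $\norm{\fZ}>\gamma d_n-r_n$, an event so rare that $n^{\kappa}$ copies of it still sum to at most $e_2$. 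Making ``the packing large enough for rate $\tfrac{1-\kappa}{4}$'' compatible with ``the separation $d_n$ large enough to absorb the factor $n^{\kappa}$'' is precisely what pins down $r_{1,n}$ and $\rho_n$, and is the only genuinely delicate point of the construction.

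For the \emph{upper bound (converse)}, take any sequence of $(n,2^{(n\log n)R},n^{\kappa},e_{1,n},e_{2,n})$ DKI codes with $e_{1,n},e_{2,n}\to0$. I first establish a minimum distance, $\norm{\fc_i-\fc_j}\ge d_n$ for all $i\ne j$ with $d_n\sim n^{-(1/2+\kappa)}$: choosing any $\mathbbmss{K}$ with $j\in\mathbbmss{K}$ and $i\notin\mathbbmss{K}$ (possible since $M>K$), the set $\sT_{\mathbbmss{K}}$ carries at least $1-e_{1,n}$ of the output mass of $\fc_j$ and at most $e_{2,n}$ of that of $\fc_i$, so the Gaussian output laws centered at $g\fc_j$ and at $g\fc_i$ are almost perfectly distinguishable, and a binary hypothesis-testing bound evaluated at the worst (smallest) $|g|\ge\gamma$ lower-bounds $\norm{\fc_i-\fc_j}$; because $\sT_{\mathbbmss{K}}$ is a union of $K$ regions, the distinguishability chargeable to a single codeword weakens with $K$, which is what permits $d_n$ to decay like $n^{-\kappa}$ and ultimately produces the $+\kappa$. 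A volume bound then closes the argument: the codewords lie in $\S_{\mathbf{0}}(n,\sqrt{nP_{\,\text{avg}}})$ and are $d_n$-separated, hence $M\le(\sqrt{nP_{\,\text{avg}}}/d_n)^n$, and applying $\tfrac{1}{n\log n}\log(\cdot)$ and letting $n\to\infty$ yields $R\le1+\kappa$.

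I expect the type-II step of the achievability proof to be the main obstacle: the outer radius must stay below $\sqrt{nP_{\,\text{avg}}}$, the packing must still contain $2^{(n\log n)R}$ centers with $R$ up to $\tfrac{1-\kappa}{4}$, and the separation must be large enough that a worst-case target set consisting of $n^{\kappa}$ codewords all at distance essentially $d_n$ from $\fc_i$ contributes negligible type-II mass; reconciling these three demands is what forces $\rho_n\sim n^{(1+\kappa)/4}$ and hence the stated lower bound. On the converse side, the analogous subtlety is extracting a per-codeword hypothesis-testing statement from the union decoder $\sT_{\mathbbmss{K}}$ while losing at most a factor polynomial in $K$.
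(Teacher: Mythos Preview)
Your overall architecture matches the paper's---sphere packing plus a distance decoder for achievability, a minimum-distance lemma plus a volume count for the converse---but the achievability analysis contains a genuine gap, and the scales you quote are inverted relative to the paper.

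\textbf{Achievability.} In the paper the small spheres have radius $\sim n^{(1+\kappa)/4}$ and sit inside the \emph{full} power sphere of radius $\sqrt{nP_{\text{avg}}}\sim n^{1/2}$; you have assigned the exponent $(1+\kappa)/4$ to the \emph{outer} radius and thereby end up with a separation $d_n\sim n^{\kappa/2}$. Either way, the decisive problem is your type-II step: you bound each summand by $\Pr\big(\norm{\fZ}>\gamma d_n-r_n\big)$, but $r_n$ is on the scale of the typical noise norm $\sim\sqrt{n}\,\sigma_Z$, whereas $\gamma d_n$ is at best $\sim n^{(1+\kappa)/4}=o(\sqrt{n})$ (and only $\sim n^{\kappa/2}$ with your choice). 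Hence $\gamma d_n-r_n$ is eventually \emph{negative} and the bound is vacuous. The decoding balls have radius much larger than the codeword separation, so a bare triangle inequality cannot separate them. The paper's fix is to expand the square,
\[
\norm{g(\bar{\fc}_i-\bar{\fc}_j)+\bar{\fZ}}^{2}
= g^{2}\norm{\bar{\fc}_i-\bar{\fc}_j}^{2}+\norm{\bar{\fZ}}^{2}
+2g\langle\bar{\fc}_i-\bar{\fc}_j,\bar{\fZ}\rangle\,,
\]
control the cross term by Chebyshev (its variance is $O(1/n)$ after normalization, using that $\norm{g(\bar{\fc}_i-\bar{\fc}_j)}$ is bounded in the relevant regime), and then observe that the deterministic shift $g^{2}\norm{\bar{\fc}_i-\bar{\fc}_j}^{2}\ge\gamma^{2}\theta_n=3\tau_n$ forces $\norm{\bar{\fZ}}^{2}$ below its mean by $\tau_n$, a Chebyshev event of probability $O\big((n\tau_n^{2})^{-1}\big)=O\big(n^{-(b+\kappa)}\big)$. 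Only after this step does multiplying by $K=n^{\kappa}$ leave a residual $O(n^{-b})\to0$. This expansion-plus-Chebyshev argument is the missing idea in your proposal.

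\textbf{Converse.} Your structure (derive a pairwise minimum distance, then count via volumes) is exactly the paper's, but the sentence ``because $\sT_{\mathbbmss{K}}$ is a union of $K$ regions, the distinguishability chargeable to a single codeword weakens with $K$'' misidentifies the mechanism: for a \emph{fixed} pair $j\in\mathbbmss{K}$, $i\notin\mathbbmss{K}$, the union only enlarges the acceptance region and makes the binary test \emph{easier}, so a clean hypothesis-testing bound would give a $K$-independent separation (and in fact a converse no worse than $R\le 1+\kappa$). The paper instead assumes $\norm{\fc_{i_1}-\fc_{i_2}}<2\sqrt{n\epsilon_n'}$ with $\epsilon_n'=A/n^{2(1+\kappa+b)}$, shows the Gaussian density discrepancy satisfies $f_{\fZ}(\fy-g\fc_{i_1})-f_{\fZ}(\fy-g\fc_{i_2})\le\omega_n\,f_{\fZ}(\fy-g\fc_{i_1})$ with $\omega_n=O(n^{-(\kappa+b)})$ on a typical region, and picks up the factor $K$ through a union bound in localizing that region; requiring $K\omega_n\to0$ is what pins down the exponent in $\epsilon_n'$ and delivers $R\le1+\kappa$.
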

\begin{proof}
The proof of Theorem~\ref{Th.DKI-Capacity} consists of two parts, namely the achievability and the converse proofs, which are provided in Sections~\ref{Sec.Achiev} and \ref{App.Conv}, respectively.
\end{proof}

\begin{remark}
    The result in Theorem~\ref{Th.DKI-Capacity} comprises the following three special cases in terms of $K$:
    \begin{itemize}
        \item[$\blacklozenge$ \; \textbf{\textcolor{blau_2b}{Unit $K=1$}}:] This cases accounts for a standard identification setup ($\kappa = 0$), that is, when the target message set is a degenerate case $\mathbbmss{K} = \{i\}_{i \in [\![M]\!]}$, i.e., $K = |\mathbbmss{K}| = 1$. Therefore, the identification setup as studied in \cite{AD89} can be regarded as a special case of $K$-identification. This result is known in the identification literature \cite{Salariseddigh_ITW,Salariseddigh_arXiv_ITW,Salariseddigh_IT,AD89}.
        \item[$\blacklozenge$ \; \textbf{\textcolor{blau_2b}{Constant $K>1$}}:] Constant $K>1$ implies $\kappa \to 0$ as $n\to\infty$. Surprisingly, our capacity result in Theorem~\ref{Th.DKI-Capacity} reveals that the bounds for the GSF with constant finite $K>1$ are in fact identical to those for the memoryless GSF given in \cite{Salariseddigh_ITW,Salariseddigh_arXiv_ITW,Salariseddigh_IT}.
        \item[$\blacklozenge$ \; \textbf{\textcolor{blau_2b}{Growing $K$}}:] Our capacity results reveal that reliable identification is possible even when $K$ scales with the codeword length as $\sim 2^{\kappa \log n}$ for $\kappa \in [0,1)$. Moreover, the impact of target identificaiton rate $\kappa$ is reflected in the capacity lower and upper bounds in \eqref{Ineq.LU}, where the bounds respectively decrease and increase in $\kappa$.
\end{itemize}    
\end{remark}
\subsection{Achievability}
\label{Sec.Achiev}
The achievability proof consists of the following two main steps.
\begin{itemize}
    \item \textbf{Step 1:} We propose a codebook construction and  derive an analytical lower bound on the corresponding codebook size using inequalities for sphere packing density.
    \item \textbf{Step~2:} To prove that this codebook leads to an achievable rate, we propose a decoder and show that the corresponding type I and type II error rates vanished as $n \to \infty$.
\end{itemize}
\subsubsection{Normalization}
Since the decoder can normalize the output symbols by $\sqrt{n}$, we have an equivalent input-output relation,
\begin{align}
    \bar{Y}_t = G\bar{x}_t + \bar{Z}_t \;,\,
\end{align}
where $G_t=G$ $\sim f_{G}$, and the noise sequence $\Bar{\fZ} \overset{\text{\scriptsize i.i.d.}}{\sim} \mathcal{N}\big(0,\frac{\sigma_Z^2}{n}\big)$, with an input power constraint
\begin{align}
    \label{Ineq.Ave_Pow_Constr}
    \norm{\bar{\fx}} \leq \sqrt{A} \;,\,
\end{align}
where $A \overset{\text{\scriptsize def}}{=} P_{\,\text{ave}}$ and
\begin{align}
    & \bar{\fx}=\frac{1}{\sqrt{n}}\fx \hspace{4mm} \,,\, \hspace{4mm} \bar{\fZ}=\frac{1}{\sqrt{n}}\fZ \hspace{4mm} \,,\, \hspace{4mm} \bar{\fY}=\frac{1}{\sqrt{n}}\fY \,.
\end{align}
\subsubsection*{Codebook Construction}
We use a packing arrangement of non-overlapping hyper spheres of radius $r_0 = \sqrt{\theta_n}$ in a large hyper sphere with radius $\sqrt{A}-\sqrt{\theta_n}$, with
\begin{align}
    \theta_n = \frac{A\sqrt{K}}{n^{\frac{1}{2}(1-b)}} = \frac{A}{n^{\frac{1}{2}(1-(b+\kappa))}} \;,\,
\end{align}
where $0 < b < 1$ is an arbitrarily small constant\footnote{\,\textcolor{mycolor5}{we recall that our achievability proof works for any $b\in(0,1)$; however, arbitrarily small values of $b$ are of interest since they result in the tightest lower bound.}}, and $\kappa \in [0,1)$.

Let $\mathscr{S}$ denote a sphere packing, i.e., an arrangement of $M$ non-overlapping spheres $\S_{\bar{\fc}_i}(n,r_0)$, $i\in [\![M]\!]$, that are packed inside the larger sphere $\S_{\f0}(n,\sqrt{A}-\sqrt{\theta_n})$ with radius $\sqrt{A}-\sqrt{\theta_n}$. As opposed to standard sphere packing in coding techniques \cite{CHSN13}, the spheres are not necessarily entirely contained within the larger sphere. That is, we only require that the centers of the spheres are inside $\S_{\f0}(n,\sqrt{A}-\sqrt{\theta_n})$ and are disjoint from each other and have a non-empty intersection with $\S_{\f0}(n,\sqrt{A}-\sqrt{\theta_n})$. The packing density $\Updelta_n(\mathscr{S})$ is defined as the ratio of the saturated packing volume to the larger sphere's volume $\text{Vol}\left(\S_{\f0}(n,\sqrt{A}-\sqrt{\theta_n})\right)$, i.e.,
\begin{align}
    \Updelta_n(\mathscr{S}) \triangleq \frac{\text{Vol}\left( \S_{\f0}(n,\sqrt{A}-\sqrt{\theta_n}) \cap \bigcup_{i=1}^{M}\S_{\bar{\fc}_i}(n,r_0)\right)}{\text{Vol}\left(\S_{\f0}(n,\sqrt{A}-\sqrt{\theta_n})\right)} \,.\,
    \label{Eq.DensitySphere}
\end{align}
Sphere packing $\mathscr{S}$ is called \emph{saturated} if no spheres can be added to the arrangement without overlap.

In particular, we use a packing argument that has a similar flavor as that observed in the Minkowski--Hlawka theorem for saturated packing \cite{CHSN13}.
Specifically, consider a saturated packing arrangement of 
\begin{align}
    \label{Eq.Union_Spheres}
    \bigcup_{i=1}^{M(n,R)} \S_{\fc_i}(n,\sqrt{\theta_n})
\end{align}
spheres with radius $r_0=\sqrt{\theta_n}$ embedded within sphere $\S_{\f0}(n,\sqrt{A}-\sqrt{\theta_n})$. Then, for such an arrangement, we have the
\begin{wrapfigure}[21]{r}{0.4\textwidth}
\vspace{-\intextsep}
\vspace{-1mm}
  \begin{center}
    \scalebox{.8}{

\begin{tikzpicture}[scale=.55][thick]

\draw[dashed] (0,0) circle (4.05cm);
\draw[blau_2b, line width=1pt] (0,0) circle (3.05cm);

\draw (0,0) circle (1cm);
\draw [fill=gray!10, fill opacity=1] (0,0) circle (1cm);
\fill (0,0) circle [radius=1.5pt];

\draw (2,0) circle (1cm);
\draw [fill=gray!10, fill opacity=1] (2,0) circle (1cm);

\draw (1,1.73) circle (1cm);
\draw [fill=gray!10, fill opacity=1] (1,1.73) circle (1cm);

\draw (-1,1.73) circle (1cm);
\draw [fill=gray!10, fill opacity=1] (-1,1.73) circle (1cm);

\draw (-2,0) circle (1cm);
\draw [fill=gray!10, fill opacity=1] (-2,0) circle (1cm);

\draw (-1,-1.73) circle (1cm);
\draw [fill=gray!10, fill opacity=1] (-1,-1.73) circle (1cm);

\draw (1,-1.73) circle (1cm);
\draw [fill=gray!10, fill opacity=1] (1,-1.73) circle (1cm);


\draw (3,-1.73) circle (1cm);
\draw [fill=mycolor9!30, fill opacity=0.4] (3,-1.73) circle (1cm);
\fill (3,-1.73) circle [radius=1.5pt];

\draw (2.05,3.49) circle (1cm);
\draw [fill=orange_6b!50, fill opacity=0.4] (2.05,3.49) circle (1cm);
\fill (2.05,3.49) circle [radius=1.5pt];

\draw (-2.05,3.49) circle (1cm);
\draw [fill=orange_6b!50, fill opacity=0.4] (-2.05,3.49) circle (1cm);
\fill (-2.05,3.49) circle [radius=1.5pt];

\draw (2.05,-3.49) circle (1cm);
\draw [fill=orange_6b!50, fill opacity=0.4] (2.05,-3.49) circle (1cm);
\fill (2.05,-3.49) circle [radius=1.5pt];

\draw (-2.05,-3.49) circle (1cm);
\draw [fill=orange_6b!50, fill opacity=0.4] (-2.05,-3.49) circle (1cm);
\fill (-2.05,-3.49) circle [radius=1.5pt];

\draw (4.05,0) circle (1cm);
\draw [fill=orange_6b!50, fill opacity=0.4] (4.05,0) circle (1cm);
\fill (4.05,0) circle [radius=1.5pt];

\draw (-4.05,0) circle (1cm);
\draw [fill=orange_6b!50, fill opacity=0.4] (-4.05,0) circle (1cm);
\fill (-4.05,0) circle [radius=1.5pt]; 

\draw (3,1.73) circle (1cm);
\draw [fill=mycolor9!30, fill opacity=0.4] (3,1.73) circle (1cm);
\fill (3,1.73) circle [radius=1.5pt];

\draw (0,2*1.73) circle (1cm);
\draw [fill=mycolor9!30, fill opacity=0.4] (0,2*1.73) circle (1cm);
\fill (0,2*1.73) circle [radius=1.5pt];

\draw (-3,1.73) circle (1cm);
\draw [fill=mycolor9!30, fill opacity=0.4] (-3,1.73) circle (1cm);
\fill (-3,1.73) circle [radius=1.5pt];

\draw (-3,-1.73) circle (1cm);
\draw [fill=mycolor9!30, fill opacity=0.4] (-3,-1.73) circle (1cm);

\draw (0,-2*1.73) circle (1cm);
\draw [fill=mycolor9!30, fill opacity=0.4] (0,-2*1.73) circle (1cm);
\fill (0,-2*1.73) circle [radius=1.5pt];

\draw [blau_2b] (0,0) -- (2.85,-1.03);
\node at (4.17,-1.20) {$\sqrt{A}-\sqrt{\theta_n}$};


\draw [dashed] (0,0) -- (-2.03,3.475);
\node at (-2.5,3.9) {$\sqrt{A}$};

\draw [mycolor9] (-3,-1.73) -- (-3.8,-2.3);
\node at (-4.37,-2.67) {$\sqrt{\theta_n}$};
\fill (-3,-1.73) circle [radius=1.5pt];

\end{tikzpicture}}
  \end{center}
  \captionsetup{justification=justified}
  \vspace{-4mm}
\caption{\footnotesize Illustration of a saturated sphere packing inside a hyper sphere, where small spheres of radius $r_0 = \sqrt{\theta_n}$ cover a larger hyper sphere. The small spheres are disjoint from each other and have a non-empty intersection with the large sphere. Some of the small spheres, colored in green, are not entirely contained within the larger sphere, and yet they are considered to be a part of the packing arrangement, since their centers fulfill the power constraint in \ref{Ineq.Ave_Pow_Constr}. Yellow colored spheres whose centers exactly lies on the circle with radius $A$ do not contribute to the packing. As we assign a codeword to each sphere center (white and green), the $2$-norm of a codeword is bounded by $\sqrt{A}$ as required.}
\end{wrapfigure}
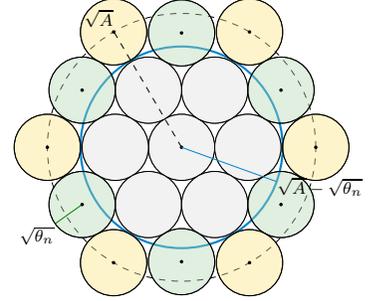
following lower \cite[Lem.~2.1]{C10} and upper bounds \cite[Eq.~45]{CHSN13} on the packing density
\begin{align}
    \label{Ineq.Density}
    2^{-n} \leq \Updelta_n(\mathscr{S}) \leq 2^{-0.599n} \;.\,
\end{align}

In our subsequent analysis, we use the above lower bound which can be proved as follows: For the saturated packing arrangement given in \eqref{Eq.Union_Spheres}, there cannot be a point in the larger sphere $\S_{\f0}(n,\sqrt{A}-\sqrt{\theta_n})$ with a distance of more than $2r_0$ from all sphere centers. Otherwise, a new sphere could be added which contradicts the assumption that the union of $M(n,R)$ spheres with radius $\sqrt{\theta_n}$ is saturated. Now, if we double the radius of each sphere, the spheres with radius $2r_0$ cover thoroughly the entire volume of $\S_{\f0}(n,\sqrt{A}-\sqrt{\theta_n})$, that is, each point inside the large hyper sphere $\S_{\f0}(n,\sqrt{A}-\sqrt{\theta_n})$ belongs to at least one of the small spheres. In general, the volume of a hyper sphere of radius $r$ is given by \cite[Eq.~(16)]{CHSN13}
\begin{align}
    \text{Vol}\left(\S_{\fx}(n,r)\right) = \frac{\pi^{\frac{n}{2}}}{\Gamma(\frac{n}{2}+1)} \cdot r^{n} \,.\,
    \label{Eq.VolS}
\end{align}
Hence, if the radius of the small spheres is doubled, the volume of $\bigcup_{i=1}^{M(n,R)} \S_{\fc_i}(n,\sqrt{\theta_n})$ is increased by $2^n$. Since the spheres with radius $2r_0$ cover $\S_{\f0}(n,\sqrt{A}-\sqrt{\theta_n})$, it follows that the original $r_0$-radius packing has a density of at least $2^{-n}$~\footnote{\,\textcolor{mycolor5}{We note that the proposed proof of the lower bound in \eqref{Ineq.Density} is non-constructive in the sense that, while the existence of the respective saturated packing is proved, no systematic construction method is provided.}}.
We assign a codeword to the center $\fc_i$ of each small sphere. The codewords satisfy the input constraint as
\begin{align}
    \label{Ineq.Norm_Infinity}
    \norm{\bar{\fc}_i} \leq \sqrt{A} \;.\,
\end{align}

Since the volume of each sphere is equal to $\text{Vol}(\S_{\fc_1}(n,r_0))$ and the centers of all spheres lie inside the sphere, the total number of spheres is bounded from below by
\begin{align}
    \label{Eq.L_Achiev}
    M & = \frac{\text{Vol}\left(\bigcup_{i=1}^{M}\S_{\bar{\fc}_i}(n,r_0)\right)}{\text{Vol}(\S_{\fc_1}(n,r_0))}
    \nonumber\\&
    \geq \frac{\text{Vol}\left(\S_{\f0}(n,\sqrt{A}-\sqrt{\theta_n})\cap\bigcup_{i=1}^{M}\S_{\bar{\fc}_i}(n,r_0)\right)}{\text{Vol}(\S_{\bar{\fc}_1}(n,r_0))}
    \nonumber\\
    & = \frac{\Updelta_n(\mathscr{S}) \cdot
    \text{Vol}\left(\S_{\f0}(n,\sqrt{A}-\sqrt{\theta_n})\right)}{\text{Vol}(\S_{\bar{\fc}_1}(n,r_0))}
    \nonumber\\&
    \geq 2^{-n} \cdot \frac{\text{Vol}\left(\S_{\f0}(n,\sqrt{A}-\sqrt{\theta_n})\right)}{\text{Vol}(\S_{\bar{\fc}_1}(n,r_0))}
    \,,\,
\end{align}
where the first inequality holds by (\ref{Eq.DensitySphere}) and the second inequality holds by (\ref{Ineq.Density}).
The above bound can be further simplified as follows
\begin{align}
    \log M & \stackrel{(a)}{\geq} \log \left( \frac{\sqrt{A}-\sqrt{\theta_n}}{r_0} \right)^n - n
    \nonumber\\&
    \stackrel{(b)}{=} n \log \left( \frac{\sqrt{A}-\sqrt{\theta_n}}{\sqrt{\theta_n}} \right) - n
    \nonumber\\& 
    = n \log \left( \sqrt{\frac{A}{\theta_n}} - 1 \right) - n
    \nonumber\\&
    \stackrel{(c)}{\geq} \frac{1}{2} n \log \left( \frac{A}{\theta_n} \right) - 2n \;,\,
\end{align}
where $(a)$ exploits (\ref{Eq.VolS}), $(b)$ follows from $r_0 = \sqrt{\theta_n}$, and $(c)$ holds by $\log (t-1) \geq \log t - 1 \,, \forall t \geq 2$.
Therefore, for $\theta_n = A / n^{\frac{1}{2}(1-(b+\kappa))}$, we obtain
\begin{align}
    \log M & \geq \frac{1}{2} n \log n^{\frac{1}{2}(1-(b+\kappa))} - 2n
    \nonumber\\
    & = \left( \frac{1-\left(b+\kappa\right)}{4} \right) n\log n - 2n \,,
    \label{Eq.Log_L}
\end{align}
where the dominant term is of order $n \log n$. Hence, for obtaining a finite value for the lower bound of the rate, $R$, \eqref{Eq.Log_L} induces the scaling law of $M$ to be $2^{(n\log n)R}$. Therefore, we obtain
\begin{align}
    R & \geq \frac{1}{\log n} \left[ \left( \frac{1-\left(b+\kappa\right)}{4} \right) \log n - 2 \right] \;,\,
\end{align}
which tends to $\frac{1-\kappa}{4}$ when $n \to \infty$ and $b\rightarrow 0$.
\subsubsection*{Encoding}
Given message $i\in [\![M]\!]$, transmit $\bar{\fx} = \Bar{\fc}_i$.
\subsubsection*{Decoding}
Let
\begin{align}
    \tau_n = \frac{\gamma^2\theta_n}{3} = \frac{A\gamma^2}{3n^{\frac{1}{2}(1-(b+\kappa))}} \;,\,
    \label{Eq.tau_n}
\end{align}
where $0 < b < 1$ is an arbitrarily small constant, $0<c<2$ is a constant, $\kappa \in [0,1)$, and $\gamma$ is the infimum value of all fading coefficients $g$.

To identify whether message $j\in \M$ was sent, given the fading coefficient $g$, the decoder checks whether the channel output $\bar{\fy}$ belongs to the following decoding set:
\begin{align}
    \sT_{\mathbbmss{K}} = \bigcup_{j \in \mathbbmss{K}} \mathbbmss{T}_{j,g} \,,
\end{align}
where
\begin{align}
    \mathbbmss{T}_{j,g} = \left\{ \bar{\fy} \in \mathbb{R}^n \,:\; \sum_{t=1}^n (\bar{y}_t - g \bar{c}_{j,t})^2
    \leq \sigma_Z^2 + \tau_n \right\} \;.\,
    \label{Eq.DecodingTerritory}
\end{align}
is referred to as the individual decoding territory evaluated for observation vector $\fy$ and codeword $\fc_j$.
\subsubsection*{Error Analysis}
Fix $e_1,e_2 > 0$ and let $\zeta_0, \zeta_1 > 0$ be arbitrarily small constants. Before we proceed, for the sake of brevity of analysis, we introduce the following conventions:
\begin{itemize}
    \item Let $Y_t(.|i,g)$ denote the channel output at time $t$ given that $\Bar{\fx} = \Bar{\fc}_i$ and $G = g$.
    \item $\fY(.|i,g) = (Y_1(.|i,g),\ldots,Y_n(.|i,g))$.
\end{itemize}

Consider the type I errors, i.e., the transmitter sends $\Bar{\fc}_i$, yet $\fY(.|i,g) \notin \mathbbmss{T}_{\mathbbmss{K},g}$. For every $i \in [\![M]\!]$, the type I error probability is given by
\begin{align}
    P_{e,1}(i) & = \sup_{g\in\G} \left[ P_{e,1} \left( i \,| g \; \right) \right] \;,\,
    \label{Eq.TypeIError-Sup}
\end{align}
where
\begin{align}
    \label{Eq.TypeIError-Sup-1}
    P_{e,1} \left( i \,| g \; \right) & = \Pr \left( \bar{\fY}(.|i,g) \in \mathbbmss{T}_{\mathbbmss{K},g}^c \right)
    \nonumber\\
    & = \Pr \left( \bar{\fY}(.|i,g) \in \left( \bigcup_{i \in \mathbbmss{K}} \mathbbmss{T}_{i,g} \right)^c \right)
    \nonumber\\
    & \stackrel{(a)}{=} \Pr \left( \bar{\fY}(.|i,g) \in \bigcap_{i \in \mathbbmss{K}} \mathbbmss{T}_{i,g}^c \right)
    \nonumber\\
    & \stackrel{(b)}{\leq} \Pr \left( \bar{\fY}(.|i,g) \in \mathbbmss{T}_{i,g}^c \right)
    \nonumber\\
    & \stackrel{(c)}{\equiv} \Pr \left( \sum_{t=1}^n (\bar{Y}_t(.|i,g) - G\bar{c}_{i,t})^2 > \sigma_Z^2+\tau_n \right)
    \nonumber\\
    & \stackrel{(d)}{=} \Pr \left(\sum_{t=1}^n {\bar{Z}_t}^2 > \sigma_Z^2 + \tau_n \right) \;,\,
\end{align}
where $(a)$ follows by \emph{De Morgan}'s law for finite number of unions, i.e., $\left( \bigcup_{i \in \mathbbmss{K}} \mathbbmss{T}_{i,g} \right)^c = \bigcap_{i \in \mathbbmss{K}} \mathbbmss{T}_{i,g}^c$, $(b)$ holds since $\bigcap_{i \in \mathbbmss{K}} \mathbbmss{T}_{i,g}^c \subset \mathbbmss{T}_{i,g}$, $(c)$ follows by definition of the individual decoding territory in \eqref{Eq.DecodingTerritory}, and $(d)$ holds since the fading coefficient $G$ and the noise vector $\bar{\fZ}$ are statistically independent.

Now, in order to bound $P_{e,1} \left( i \,| g \; \right)$, we apply Chebyshev's inequality, namely
\begin{align}
    P_{e,1} \left( i \,| g \right) & \leq \Pr\left( \sum_{t=1}^n {\bar{Z}_t}^2 - \sigma_Z^2 > \tau_n \right)
    \nonumber\\
    & \stackrel{(a)}{\leq} \frac{3\sigma_Z^4}{n\tau_n^2}
    \nonumber\\
    & \stackrel{(b)}{=} \frac{27\sigma_Z^4}{A^2 \gamma^4 n^{\kappa+b}}
    \nonumber\\
    & \leq e_1 \;,\,
\end{align}
where $(a)$ holds since the fourth moment of a Gaussian variable $V\sim \N(0,\sigma_V^2)$ is $\mathbb{E}[V^4]=3\sigma_V^4$ and $(b)$ follows from \eqref{Eq.tau_n}. Hence, $P_{e,1} \left( i \, | g \,\right) \leq e_1 \;,\, \forall g\in\G$ holds for sufficiently large $n$ and arbitrarily small $e_1 > 0$. Thereby, the type I error probability satisfies $P_{e,1}\left( i \right) \leq e_1$; see \eqref{Eq.TypeIError-Sup}.

Next, we address type II errors, i.e., when $\Bar{\fY}(.|i,g) \in \mathbbmss{T}_{\mathbbmss{K},g}$ while the transmitter sent $\Bar{\fc}_i$ with $i \notin \mathbbmss{K}$.
Then, for every $\mathbbmss{K} \in \binom{M}{K}$, where $i\notin \mathbbmss{K}$, the type II error probability is given by
\begin{align}
    \label{Eq.TypeIIError-Sup}
    P_{e,2}(i,\mathbbmss{K})  = \sup_{g\in\G} \left[ P_{e,2} \left( i,\mathbbmss{K} \, | g \; \right) \right] \;,\,
\end{align}
where
\begin{align}
    \label{Eq.TypeIIError-Sup-1}
    P_{e,2} \left( i,\mathbbmss{K} \,| g \right) & = \Pr \left( \bar{\fY}(.|i,g) \in \mathbbmss{T}_{\mathbbmss{K},g} \right)
    \nonumber\\&
    = \Pr \left( \bar{\fY}(.|i,g) \in \left( \bigcup_{j \in \mathbbmss{K}} \mathbbmss{T}_{j,g} \right) \right)
    \nonumber\\
    & \equiv \Pr\left( \bigcup_{j \in \mathbbmss{K}} \left\{
    \sum_{t=1}^n \left(\bar{Y}_t(.|i,g) - G\bar{c}_{j,t}\right)^2
    \leq \sigma_Z^2+\tau_n \right\} \right)
    \nonumber\\
    & \stackrel{(a)}{=} \Pr\left(\bigcup_{j \in \mathbbmss{K}} \left\{ \sum_{t=1}^n \left(g\left(\bar{c}_{i,t}-\bar{c}_{j,t}\right)+\bar{Z}_t\right)^2
    \leq \sigma_Z^2+\tau_n \right\} \right)
    \nonumber\\
    & \stackrel{(b)}{\leq} \sum_{j \in \mathbbmss{K}} \Pr \left( \sum_{t=1}^n \left(g\left(\bar{c}_{i,t}-\bar{c}_{j,t}\right)+\bar{Z}_t\right)^2
    \leq \sigma_Z^2+\tau_n \right) \,,\,
\end{align}
where $(a)$ hold since the fading coefficient $G$ and the noise vector $\bar{\fZ}$ are statistically independent and $(b)$ follows by the union bound, i.e., the probability of union of events is upper bounded by sum of probability of the individual events.

In order to bound \eqref{Eq.TypeIIError-Sup-1}, we divide into two cases. First, consider $g \in \G$ such that $\| g ( \bar{\fc}_i - \bar{\fc}_j ) \| > 2 \sqrt{\sigma_Z^2 + \tau_n}$.
Therefore, by the reverse triangle inequality, $\norm{\fa - \fb} \geq \left| \norm{\fa} - \norm{\fb} \right|$, we have
\begin{align}
    \sqrt{ \sum_{t=1}^n \left( \left( g\left( \bar{c}_{i,t} - \bar{c}_{j,t} \right) \right) + \bar{Z}_t \right)^2} & \geq \norm{g \left( \bar{\fc}_i - \bar{\fc}_j \right)} - \norm{\bar{\fZ}}
    \nonumber\\
    & \geq 2 \sqrt{\sigma_Z^2 + \tau_n} - \norm{\bar{\fZ}} \,.\,
\end{align}
Hence, for every $g$ such that $\| g \left( \bar{\fc_i} - \bar{\fc_j} \right) \| > 2 \sqrt{\sigma_Z^2 + \tau_n}$, we can bound the type II error probability by
\begin{align}
    \label{Ineq.TypeII_Slow_Conditional1}
    P_{e,2} \left( i,\mathbbmss{K} \,\big| g \,\right) & \leq \sum_{j \in \mathbbmss{K}} \Pr\left( \norm{\bar{\fZ}} \geq \sqrt{\sigma_Z^2 + \tau_n} \right)
    \nonumber\\
    & = \sum_{j \in \mathbbmss{K}} \Pr\left(\sum_{t=1}^n {\bar{Z}_t}^2 > \sigma_Z^2 + \tau_n \right)
    \nonumber\\
    & \leq \frac{3K\sigma_Z^4}{n\tau_n^2} 
    \nonumber\\
    & = \frac{27\sigma_Z^4}{A^2 \gamma^4 n^b}
    \nonumber\\
    & \leq e_2 \;,\,
\end{align}
where $(a)$ follows from applying Chebyshev's inequality and since the fourth moment of a Gaussian variable $V\sim \N(0,\sigma_V^2)$ is $\mathbb{E}[V^4] = 3\sigma_V^4$ and $(b)$ follows from \eqref{Eq.tau_n}. Hence, $P_{e,1} \left( i \, | g \,\right) \leq e_1 \;,\, \forall g\in\G$ holds for sufficiently large $n$ and arbitrarily small $e_1 > 0$. Thereby, the type I error probability satisfies $P_{e,2}\left( i,\mathbbmss{K} \right) \leq e_2$; see \eqref{Eq.TypeIError-Sup}.

Now, we focus on the second case, i.e., when
\begin{align}
    \label{Ineq.Bound_G}
    \norm{g \left( \bar{\fc}_i - \bar{\fc}_j \right)} \leq 2 \sqrt{\sigma_Z^2 + \tau_n} \;.\,
\end{align}
Observe that for every given $g\in\G$,
\begin{align}
    \sum_{t=1}^n (g(\bar{c}_{i,t}-\bar{c}_{j,t}) + \bar{Z}_t)^2 = \sum_{t=1}^n g^2(\bar{c}_{i,t}-\bar{c}_{j,t})^2 + \sum_{t=1}^n \bar{Z}_t^2 + 2\sum_{t=1}^n g(\bar{c}_{i,t} - \bar{c}_{j,t})Z_t \;.\,
     \label{Eq.Pe2normSlow}
\end{align}
Then define the event
\begin{align}
    \E_0(\fZ|g) = \left\{ \fZ \in \mathbb{R}^n \;:\, \Big| \sum_{t=1}^n g(\bar{c}_{i,t}-\bar{c}_{j,t})\bar{Z}_t \Big| > \frac{\tau_n}{2} \right\} \;,\,
    \label{Eq.E0FadingSlow}
\end{align}
Now, in order to bound $\Pr(\E_0(\fZ|g))$, we apply Chebyshev's inequality, namely
\begin{align}
    \Pr\left( \E_0(\fZ|g) \right) & \leq \frac{\text{Var}\left[ \sum_{t=1}^n g(\bar{c}_{i,t}-\bar{c}_{j,t}) \bar{Z}_t \right]}{\left( \tau_n / 2 \right)^2}
    \nonumber\\&
    \stackrel{(a)}{=} \frac{4 \sum_{t=1}^n g^2(\bar{c}_{i,t}-\bar{c}_{j,t})^2 \mathbb{E}[\bar{Z}_t^2]}{\tau_n^2}
    \nonumber\\&
    \stackrel{(b)}{=} \frac{4\sigma_Z^2 \| g (\bar{\fc}_i-\bar{\fc}_j) \|^2}{n\tau_n^2}
    \nonumber\\&
    \stackrel{(c)}{=} \frac{16 \sigma_Z^2 \left( \sigma_Z^2 + \tau_n \right)}{n\tau_n^2}
    \nonumber\\&
    = \frac{144\sigma^2_Z\left( \sigma_Z^2 + \tau_n \right)}{A^2\gamma^4n^{\kappa+b}}
    \nonumber\\&
    \overset{\text{\scriptsize def}}{=} \zeta_0 \;,\,
    \label{Eq.E0-Event}
\end{align}
where $(a)$ and $(b)$ holds since the noise sequence $\Bar{\fZ} \overset{\text{\scriptsize i.i.d.}}{\sim} \mathcal{N}\big(0,\frac{\sigma_Z^2}{n}\big)$, that is, $\text{Var}[\Bar{Z}_t] = \mathbb{E}[\bar{Z}_t^2] - \mathbb{E}^2[\bar{Z}_t] = \frac{\sigma_Z^2}{n}$, and $(c)$ follows from \eqref{Ineq.Bound_G}. Observe that given the complementary event $\E_0^c(\fZ|g)$, we have
\begin{align}
    2\sum_{t=1}^n g\left( \bar{c}_{i,t} - \bar{c}_{j,t} \right) \bar{Z}_t \geq - \tau_n \;,\,
\end{align}
Therefore, the event $\E_0^c(\fZ|g)$, the type II error event in \eqref{Eq.TypeIIError-Sup-1}, and the identity in \eqref{Ineq.Bound_G} together imply that the following event occurs,
\begin{align}
    \E_1(\fZ|g) = \left\{ \fZ \in \mathbb{R}^n \;:\, \sum_{t=1}^n g^2 (\bar{c}_{i,t} - \bar{c}_{j,t})^2 + \sum_{t=1}^n \bar{Z}_t^2 \leq \sigma_Z^2 + 2\tau_n \right\} \;.\,
    \label{Eq.Pe2normConsequenceSlow}
\end{align}
Now lets define
\begin{align}
    \mathcal{H}_{i,j}(\fZ|g) = \left\{ \fZ \in \mathbb{R}^n \;:\, \sum_{t=1}^n (g (\bar{c}_{i,t} - \bar{c}_{j,t}) + \bar{Z}_t)^2 \leq \sigma_Z^2 + \tau_n \right\} \;.\,
\end{align}
Therefore, applying the law of total probability to (\ref{Eq.TypeIIError-E_0+E_1}), we have
\begin{align}
    P_{e,2} ( i,\mathbbmss{K} | \,g ) & = \sum_{j \in \mathbbmss{K}} \left[
    \Pr\left( \mathcal{H}_{i,j}(\fZ|g) \cap \E_0(\fZ|g) \right) + \Pr\left( \mathcal{H}_{i,j}(\fZ|g) \cap {\E_0^c(\fZ|g)} \right) \right]
    \nonumber\\
    & \leq \sum_{j \in \mathbbmss{K}} \left[ \Pr(\E_0(\fZ|g)) + \Pr\left( \E_1(\fZ|g) \right) \right]
    \nonumber\\
    & \leq K \left[ \zeta_0 + \Pr\left( \E_1(\fZ|g) \right) \right] \;,\,
    \label{Eq.TypeIIError-E_0+E_1}
\end{align}
where the last inequality holds by (\ref{Eq.E0-Event}).

We now proceed with bounding $\Pr(\E_1(\fZ|g))$ as follows. Based on the codebook construction, each codeword is surrounded by a sphere of radius $\sqrt{\theta_n}$, that is 
\begin{align}
    \norm{\bar{\fc}_i - \bar{\fc}_j} \geq \sqrt{\theta_n} \;.\,
\end{align}
which implies
\begin{align}
    g^2 \norm{\bar{\fc}_i - \bar{\fc}_j}^2 \geq  \gamma^2 \theta_n \;,\,
\end{align}
where $\gamma$ is the infimum value in ${\G}$. Thus, we can establish the following upper bound for event $\E_1(\fZ|g)$:
\begin{align}
    \label{Ineq.E_1}
    \Pr(\E_1(\fZ|g)) & \leq \Pr\left( \norm{\bar{\fZ}}^2 \leq \sigma_Z^2 + 2 \tau_n - \gamma^2 \theta_n  \right)
    \nonumber\\&
    = \Pr\left( \norm{\bar{\fZ}}^2 - \sigma_Z^2 \leq - \tau_n \right)
    \nonumber\\&
    = \Pr\left(\sum_{t=1}^n \bar{Z}_t^2 - \sigma_Z^2 \leq - \tau_n \right)
    \nonumber\\&
    \stackrel{(a)}{\leq} \frac{\sum_{t=1}^n\text{Var}[\bar{Z}_t^2]}{ \tau_n^2}
    \nonumber\\
    & \stackrel{(b)}{\leq} \frac{\sum_{t=1}^n \mathbb{E}[ \bar{Z}_t^4 ]}{ \tau_n^2} 
    \nonumber\\
    & = \frac{ 3n \left(\frac{\sigma_Z^2}{n}\right)^2}{ \tau_n^2}
    \nonumber\\
    & = \frac{3\sigma_Z^4}{n \tau_n^2}
    \nonumber\\
    & \stackrel{(c)}{=} \frac{27\sigma_Z^4}{A^2 \gamma^4 n^{\kappa+b}}
    \nonumber\\&
    \overset{\text{\scriptsize def}}{=} \zeta_1 \;,\,
\end{align}
where $(a)$ follows from applying Chebyshev's inequality, $(b)$ holds since the fourth moment of a Gaussian variable $V\sim \N(0,\sigma_V^2)$ is $\mathbb{E}[V^4] = 3\sigma_V^4$ and $(c)$ follows from \eqref{Eq.tau_n} and \eqref{Eq.E0-Event}. Therefore, we can proceed to bound the rightmost in \eqref{Eq.TypeIIError-E_0+E_1} as follows
\begin{align}
    \label{Ineq.TypeII_Slow_Conditional2}
    P_{e,2}(i,\mathbbmss{K}) & \leq K \left[ \Pr(\E_0(\fZ|g)) + \Pr(\E_1(\fZ|g)) \right]
    \nonumber\\&
    \leq K \left[ \zeta_0 + \zeta_1 \right]
    \nonumber\\&
    = \frac{144K\sigma^2_Z\left( \sigma_Z^2 + \tau_n \right)}{A^2\gamma^4n^{\kappa+b}} + \frac{27K\sigma_Z^4}{A^2 \gamma^4 n^{\kappa+b}}
    \nonumber\\&
    = \frac{144\sigma^2_Z\left( \sigma_Z^2 + \tau_n \right) + 27\sigma_Z^4}{A^2\gamma^4n^b}
    \nonumber\\&
    \leq e_2 \,,\,
\end{align}
hence, $P_{e,2} \left( i,\mathbbmss{K} \, | g \,\right) \leq e_2 \;,\, \forall g\in\G$ holds for sufficiently large $n$ and arbitrarily small $e_2 > 0$.
Thereby, the type II error probability satisfies $P_{e,2}\left( i,\mathbbmss{K} \right) \leq e_2$; see \eqref{Eq.TypeIIError-Sup}.

We have thus shown that for every $e_1,e_2>0$ and sufficiently large $n$, there exists an $(n, M(n,R), K(n,\kappa), \allowbreak e_1, e_2)$ code.
\subsection{Converse Proof}
\label{App.Conv}
The converse proof consists of the following two main steps.
\begin{itemize}
    \item \textbf{Step 1:} We show in Lemma~\ref{Lem.DConv} that for any achievable rate (for which the type I and type II error probabilities vanish as $n\to\infty$), the distance between every pair of codeword should be at least larger than a threshold.
    \item \textbf{Step~2:} Employing the Lemma~\ref{Lem.DConv}, we derive an upper bound on the codebook size of achievable DKI codes.
\end{itemize}
We start with the following lemma which establish a lower bound on the Euclidean norm of two different codewords' difference.
\begin{lemma}
\label{Lem.DConv}
Suppose that $R$ is an achievable rate for the GSF $\bG$ and let $b>0$ be an arbitrarily small constant that does not depend on codeword length $n$. Consider a sequence of $(n, M(n,R), K(n,\kappa), \allowbreak e_1^{(n)}, \allowbreak e_2^{(n)})$ codes $(\C^{(n)},\T^{(n)})$ such that $e_1^{(n)}$ and $e_2^{(n)}$ tend to zero as $n\rightarrow\infty$. Then there exists $n_0(b)$, such that for all $n>n_0(b)$, every pair of codewords in the codebook $\C^{(n)}$ satisfies the following property.

For every pair of codewords, $\fc_{i_1}$ and $\fc_{i_2}$,
\begin{subequations}
\begin{align}
    \label{Ineq.Converse_Lem}
    \norm{\fc_{i_1} - \fc_{i_2}}\geq 2\sqrt{n\epsilon_n'} \;,\,
\end{align}
for all $i_1,i_2\in [\![M]\!]$, such that $i_1\neq i_2$, with
\begin{align}
    \epsilon_n' = \frac{A}{n^{2 (1 + \kappa+b)}} \;,\,
\end{align}
\end{subequations}
\end{lemma}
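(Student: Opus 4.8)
The plan is to argue by contradiction: suppose a sequence of good codes exists but two codewords $\fc_{i_1}$ and $\fc_{i_2}$ are too close, i.e. $\norm{\fc_{i_1}-\fc_{i_2}} < 2\sqrt{n\epsilon_n'}$. The intuition is that when two codewords are close, the channel outputs they induce are statistically almost indistinguishable, so the decoder's individual territories $\bbmT_{i_1,g}$ and $\bbmT_{i_2,g}$ must overlap substantially; but then one can engineer a target set $\mathbbmss{K}$ forcing either a type I or a type II error to stay bounded away from zero, contradicting $e_1^{(n)},e_2^{(n)}\to 0$. Concretely, I would pick $\mathbbmss{K}$ so that $i_1\in\mathbbmss{K}$ but $i_2\notin\mathbbmss{K}$ (possible since $K<M$), and consider sending $\fc_{i_2}$; if the output lands in $\bbmT_{i_1,g}\subseteq\sT_{\mathbbmss{K}}$, that is a type II error, while if it avoids $\sT_{\mathbbmss{K}}$ entirely when $i_2$ plays the role of a target for the symmetric set, that is a type I error. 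The key quantitative step is that for $\norm{\fc_{i_1}-\fc_{i_2}}$ small, $P_{e,1}(i_1)+P_{e,2}(i_2,\mathbbmss{K})\geq$ some constant, which forces both error terms to be bounded below — the contradiction.

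The main computation is the Gaussian overlap estimate. Fix $g\in\G$ with $|g|\geq\gamma$. If the decoder correctly identifies $i_1$ when $\fc_{i_1}$ is sent, then with probability at least $1-e_1^{(n)}$ the output $\bar\fY=g\bar\fc_{i_1}+\bar\fZ$ lies in $\bbmT_{i_1,g}$, i.e. $\norm{\bar\fY-g\bar\fc_{i_1}}^2\leq\sigma_Z^2+\tau_n$, equivalently $\norm{\bar\fZ}^2\leq\sigma_Z^2+\tau_n$. The complementary requirement, that the decoder does NOT wrongly accept when $\fc_{i_2}$ is sent with $i_2\notin\mathbbmss{K}$, demands $\bar\fY'=g\bar\fc_{i_2}+\bar\fZ'\notin\bbmT_{i_1,g}$ with probability $\geq 1-e_2^{(n)}$, i.e. $\norm{g(\bar\fc_{i_2}-\bar\fc_{i_1})+\bar\fZ'}^2>\sigma_Z^2+\tau_n$. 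Now I would use the triangle inequality: if $\norm{g(\bar\fc_{i_1}-\bar\fc_{i_2})}$ is small enough (which is exactly what $\norm{\fc_{i_1}-\fc_{i_2}}<2\sqrt{n\epsilon_n'}$ combined with $|g|\le$ its sup buys — or, being careful, one only needs a lower bound on $|g|$ and should instead work with the worst-case $g$ permitted), the two events above are contradictory for the same noise realization, so $P_{e,1}(i_1\mid g)+P_{e,2}(i_2,\mathbbmss{K}\mid g)\geq$ a fixed positive constant, hence $e_1^{(n)}+e_2^{(n)}$ cannot vanish. The precise threshold $\epsilon_n'=A/n^{2(1+\kappa+b)}$ should drop out of matching the scale of $\tau_n$ (which is $\Theta(n^{-\frac12(1-b-\kappa)})$ after normalization) against the displacement $\norm{g(\bar\fc_{i_1}-\bar\fc_{i_2})}^2$, keeping track of the $1/\sqrt n$ normalization factors and the factor $K=n^\kappa$ that enters when the union over $\mathbbmss{K}$ is taken.

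A cleaner route that avoids choosing a specific $\mathbbmss{K}$: reduce directly to the $K=1$ converse machinery. For any fixed target set $\mathbbmss{K}$ and any $i\in\mathbbmss{K}$, correctness forces $\Pr(\bar\fY(\cdot|i,g)\in\bbmT_{i,g})\geq 1-e_1^{(n)}$ since $\bbmT_{i,g}\subseteq\sT_{\mathbbmss{K}}$; for any $i'\notin\mathbbmss{K}$, $\Pr(\bar\fY(\cdot|i',g)\in\bbmT_{j,g})\leq e_2^{(n)}$ for each $j\in\mathbbmss{K}$ (since $\bbmT_{j,g}\subseteq\sT_{\mathbbmss{K}}$). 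Choosing $\mathbbmss{K}$ with $i_1\in\mathbbmss{K}$, $i_2\notin\mathbbmss{K}$, this gives the two-codeword statements $\Pr(\norm{\bar\fZ}^2\leq\sigma_Z^2+\tau_n)\geq 1-e_1^{(n)}$ and $\Pr(\norm{g(\bar\fc_{i_2}-\bar\fc_{i_1})+\bar\fZ}^2\leq\sigma_Z^2+\tau_n)\leq e_2^{(n)}$. Then a concentration argument for $\norm{\bar\fZ}^2$ (it is $\frac{\sigma_Z^2}{n}\chi^2_n$, concentrated around $\sigma_Z^2$ with fluctuations $\Theta(1/\sqrt n)$, which is the same order as $\tau_n$ — so one must be careful and use an anti-concentration/central-limit lower bound, not just Chebyshev) shows the first event has probability bounded away from $0$ and $1$, and if $g\bar\fc_{i_1}$ and $g\bar\fc_{i_2}$ are within distance $2\sqrt{\sigma_Z^2+\tau_n}-(\text{small})$ the shifted event cannot have vanishing probability, contradiction.

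The hard part will be the concentration/anti-concentration bookkeeping around the threshold $\sigma_Z^2+\tau_n$: because $\tau_n\to 0$ at a polynomial rate while the natural fluctuations of $\norm{\bar\fZ}^2$ are of order $n^{-1/2}$, the two scales are comparable and one cannot be cavalier. I expect the argument needs a genuine lower bound on $\Pr(\norm{\bar\fZ}^2\le\sigma_Z^2+\tau_n)$ — roughly $\tfrac12$ minus a vanishing term by a CLT estimate — and then a matching statement that translating the mean by an amount exceeding $2\sqrt{\sigma_Z^2+\tau_n}$ in $\ell_2$ pushes the probability below any $e_2^{(n)}$, which is where the precise exponent $2(1+\kappa+b)$ is pinned down. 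Getting the constants and the interplay with the factor $K$ right, while only assuming $|g|\geq\gamma>0$ (and using the worst admissible $g$ in the $\sup$), is the delicate bookkeeping; the structural logic is otherwise a direct adaptation of the $K=1$ converse.
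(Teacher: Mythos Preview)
Your argument has a structural gap: it relies on the \emph{specific} distance-threshold decoder $\bbmT_{j,g}=\{\bar\fy:\norm{\bar\fy-g\bar\fc_j}^2\le\sigma_Z^2+\tau_n\}$ from the achievability section. A converse must hold for an \emph{arbitrary} sequence of $(n,M,K,e_1^{(n)},e_2^{(n)})$ codes, so you have no control whatsoever over the shape of the regions $\bbmT_{j,g}$, and the parameter $\tau_n$ is simply not available to you. All statements of the form ``$\norm{\bar\fZ}^2\le\sigma_Z^2+\tau_n$'' and the concentration/anti-concentration discussion around that threshold are therefore not usable. Relatedly, in your ``cleaner route'' the deduction ``correctness forces $\Pr(\bar\fY(\cdot|i,g)\in\bbmT_{i,g})\ge 1-e_1^{(n)}$ since $\bbmT_{i,g}\subseteq\sT_{\mathbbmss{K}}$'' is backwards: containment gives an \emph{upper} bound on the smaller set's probability. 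The type~I guarantee only says $\Pr(\bar\fY\in\sT_{\mathbbmss{K}})\ge 1-e_1^{(n)}$ for the full union, and since $K=n^\kappa>1$ you cannot collapse $\sT_{\mathbbmss{K}}$ to a single $\bbmT_{i_1,g}$.

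The paper's proof never isolates an individual $\bbmT_{i,g}$. It works directly with the union $\sT_{\mathbbmss{K}}$ (for a set $\mathbbmss{K}$ with $i_1\in\mathbbmss{K}$, $i_2\notin\mathbbmss{K}$) and compares the two Gaussian output measures centred at $g\fc_{i_1}$ and $g\fc_{i_2}$ on that region. The key step is a pointwise density-ratio bound: on the high-probability shell $\{\norm{\fy-g\fc_{i_2}}\le\sqrt{n(\sigma_Z^2+\zeta)}\}$ one has, via the triangle inequality and the explicit Gaussian form,
\[
f_{\fZ}(\fy-g\fc_{i_1})-f_{\fZ}(\fy-g\fc_{i_2})\;\le\; \omega_n\, f_{\fZ}(\fy-g\fc_{i_1}),\qquad \omega_n=O\!\big(n^{-(\kappa+b)}\big),
\]
while the off-shell contribution is controlled by Chebyshev and a union bound over $K$ terms. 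This yields $P_{e,1}(i_1)+P_{e,2}(i_2,\mathbbmss{K})\ge 1-\mu-K\omega_n$, and the exponent in $\epsilon_n'=A/n^{2(1+\kappa+b)}$ is chosen precisely so that the dominant cross term $2g\alpha_n\sqrt{n(\sigma_Z^2+\zeta)}$ is $O(n^{-(\kappa+b)})$, making $K\omega_n=n^{\kappa}\cdot O(n^{-(\kappa+b)})=O(n^{-b})\to 0$. So the exponent is fixed by the density-comparison estimate together with the factor $K$, not by matching any decoder threshold $\tau_n$.
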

\begin{proof}
The proof is given in Appendix~\ref{App.Converse}.
\end{proof}
Next, we use Lemma~\ref{Lem.DConv} to prove the upper bound on the DKI capacity. Observe that Lemma~\ref{Lem.DConv} implies that the distance between every pair of codewords satisfies
\begin{align}
   \norm{\fc_{i_1} - \fc_{i_2}} \geq 2\sqrt{n\epsilon_n'} \;.\,
\end{align}
Thus, we can define an arrangement of non-overlapping spheres $\S_{\fc_i}(n,\sqrt{n\epsilon_n'})$, i.e., spheres of radius $\sqrt{n\epsilon_n'}$ that are centered at the codewords $\fc_i$. Since the codewords all belong to a large hyper sphere $\S_{\f0}(n,\sqrt{nA})$ of radius $\sqrt{nA}$, it follows that the number of packed small spheres, i.e., the number of codewords $M$, is bounded by
\begin{align}
    \label{Eq.L}
    M & = \frac{\text{Vol}\left(\bigcup_{i=1}^{M} \S_{\fc_i}(n,r_0)\right)}{\text{Vol}(\S_{\fc_1}(n,\sqrt{nA}+r_0))}
    \nonumber\\
    & \stackrel{(a)}{=} \Updelta_n(\mathscr{S}) \cdot \frac{\text{Vol}\left(\S_{\f0}(n,\sqrt{nA} + r_0)\right)}{\text{Vol}(\S_{\fc_1}(n,r_0))}
    \nonumber\\
    & \stackrel{(b)}{\leq} 2^{-0.599n} \cdot \frac{\text{Vol}\left(\S_{\f0}(n,\sqrt{nA} + r_0)\right)}{\text{Vol}(\S_{\fc_1}(n,r_0))} \;,\,
\end{align}
where $(a)$ holds by definition of packing density, $(b)$ follows from inequality (\ref{Ineq.Density}). The above bound can be further simplified as
follows
\begin{align}
    \label{Eq.Converse_Log_L}
    \log M & \stackrel{(a)}{\leq} \log \left( \frac{\sqrt{nA} + r_0}{r_0} \right)^n - 0.599n
    \nonumber\\&
    \leq n\log \left( \frac{\sqrt{nA} + r_0}{r_0} \right) - 0.599n
    \nonumber\\&
    \stackrel{(b)}{=} \frac{1}{2} n \log \left( \frac{A}{\epsilon_n'} + 1 \right) - 0.599n \,,
\end{align}
where $(a)$ exploits \eqref{Eq.VolS} and $(b)$ follows from $r_0 = \frac{1}{2} ( 2\sqrt{n\epsilon_n'} )$. Therefore, for $\epsilon_n' = A / n^{2(1+\kappa+b)}$, we obtain
\begin{align}
    \log M & \leq \frac{1}{2} n \log \left( n^{2(1+\kappa+b)} + 1 \right) - 0.599n
    \nonumber\\
    & = \frac{1}{2} n \log \left( n^{2(1+\kappa+b)} \left( 1 + 1 / n^{2(1+\kappa+b)} \right) \right) - 0.599n
    \nonumber\\
    & = \frac{1}{2} n \log \left( n^{2(1+\kappa+b)} \right) + \frac{1}{2} n \log \left( 1 + 1 / n^{2(1+\kappa+b)} \right) - 0.599n
    \nonumber\\
    & = (1+\kappa+b) \, n\log n + \frac{1}{2} n \log \left( 1 + 1 / n^{2(1+\kappa+b)} \right) - 0.599n \,,
\end{align}
where the dominant term is again of order $n \log n$. Hence, for obtaining a finite value for the upper bound of the rate, $R$, \eqref{Eq.Converse_Log_L} induces the scaling law of $M$ to be $2^{(n\log n)R}$. Hence, we obtain
\begin{align}
    R & \leq \frac{1}{n\log n} \left[ (1+\kappa+b) \, n\log n + \frac{1}{2} n \log \left( 1 + 1 / n^{2(1+\kappa+b)} \right) - 0.599n \right]
    \nonumber\\
    & = 1 + \kappa + b + \log \left( 1 + 1 / n^{2(1+\kappa+b)} \right) / \log n - 0.599 / \log n
    \;,\,
\end{align}
which tends to $1 + \kappa$ as $n \to \infty$ and $b \to 0$.  This completes the proof of Theorem~\ref{Th.DKI-Capacity}.
\section{Summary and Future Directions}
\label{Sec.Summary}
In this work, we studied the DKI problem over the GSF with $K$ number target messages. We assumed that $K=K(n,\kappa)=2^{\kappa\log n} = n^{\kappa}$ where $\kappa \in [0,1)$ scales sub-linearly with the codeword length $n$. In practice, the receiver sometimes suspend the exact matching task as is considered for the standard identification \cite{Salariseddigh_IT,Salariseddigh_IT} and requires only to spot an object among a group, therefore, our results in this paper may serve as a model for event-triggered based tasks in the context of many practical XG applications where population of the target group scales sub-linearly in the codeword length. Especially, we obtained lower and upper bounds on the DKI capacity of the GSF with $K=2^{\kappa\log n}$ many target messages subject to average power constraint with the codebook size of $M(n,R)=2^{(n\log n)R}=n^{nR}$. Our results for the DKI capacity of the GSF revealed that the super-exponential scale of $n^{nR}=2^{(n\log n)R}$ is again the appropriate scale for codebook size. This scale coincides as of the codebook for the memoryless GSF and Gaussian channels \cite{Salariseddigh_IT,Salariseddigh_ITW} and stands considerably different from the traditional scales in transmission and RI setups where corresponding codebooks size grows exponentially and double exponentially, respectively.

We show the achievability proof using a packing of hyper spheres and a distance decoder. In particular, we pack hyper spheres with radius $\sqrt{n\theta_n} \sim n^{\frac{1+\kappa}{4}}$ where $\kappa \in [0,1)$ is the target identification rate, inside a larger hyper sphere, which results in $\sim 2^{((1-\kappa)/4) n\log n}$ codewords.
For the converse proof, we follow a similar approach as in our previous work for the standard identification over the slow fading channel \cite{Salariseddigh_IT,Salariseddigh_arXiv_ITW}. In general, the derivation here is more involved than the derivation in the standard identification case \cite{Salariseddigh_ITW} and entails employing of new analysis and inequalities. In our previous work on Gaussian channels with slow fading \cite{Salariseddigh_ITW}, the converse proof was based on establishing a minimum distance between each pair of codewords. Here, we incorporate effect of the number of target messages into the minimum distance in the relevant Lemma; see Eq.~1 \ref{Lem.DConv}.

The results presented in this paper can be extended in several directions, some of which are listed in the following as potential topics for future research works:
\begin{itemize}
    \item \textcolor{blau_2b}{\textbf{Memory}}: Including inter-symbol (ISI) interference into the channel model assuming that the degree of ISI is either constant or growing function in codeword length as observed in a recent work for Poisson channel \cite{Salariseddigh22_2}.
    \item \textcolor{blau_2b}{\textbf{Fast Fading}}: The results in this paper can be extended to the Gaussian channels with fast fading model.
    \item \textcolor{blau_2b}{\textbf{Maximum Power Constraint}}: Our achievability proof in this work consider only the average power constraint, however, an interesting future research may include both the average and maximum power constraints at the same time which seems more practical.
    \item \textcolor{blau_2b}{\textbf{Continuous Alphabet Conjecture}}: Our observations for the codebook size of following studies
        \begin{itemize}
            \item Standard identification over the Gaussian channels without memory \cite{Salariseddigh_ITW,Salariseddigh_arXiv_ITW},
            \item Standard identification over the Poisson channels without memory \cite{Salariseddigh_GC_IEEE,Salariseddigh_GC_arXiv,Salariseddigh22},
            \item Standard identification over the Poisson channels with memory \cite{Salariseddigh22_2},
            \item $K$-identification over the Slow fading channel without memory (current paper),
        \end{itemize}
    lead us to conjecture that the codebook size for every \emph{continuous} alphabet channel either in standard or $K$-identification and with/out memory is a super-exponential function, i.e., $2^{(n\log n)R}$. However, a formal proof of this conjecture remains unknown.
    \item \textcolor{blau_2b}{\textbf{Fekete's Lemma}}: Investigation of the behavior of the DKI capacity in the sense of Fekete's Lemma \cite{Boche20}: To verify whether the pessimistic ($\underline{C} = \liminf_{n \to \infty} \allowbreak \frac{\log M(n,R)}{n \log n}$) and optimistic ($\overline{C} = \limsup_{n \to \infty} \frac{\log M(n,R)}{n \log n}$) capacities \cite{A06} coincide or not; see \cite{Boche20} for more details.
    \item \textcolor{blau_2b}{\textbf{Channel Reliability Function}}: A complete characterization of the asymptotic behavior of the decoding errors as a function of the codeword length for $0 < R < C$ requires knowledge of the corresponding channel reliability function (CRF) \cite{Boche21}. To the best of the authors’ knowledge, the CRF for DKI has not been studied in the literature so far, neither for the Gaussian channel \cite{Salariseddigh_IT} nor the Poisson channel \cite{Salariseddigh_GC_IEEE,Salariseddigh_GC_arXiv,Salariseddigh22}.
    \item \textcolor{blau_2b}{\textbf{Explicit Code Construction}}: Explicit construction of DKI codes with incorporating the effect of number of target messages and the development of low-complexity encoding/decoding schemes for practical designs where the associated efficiency of such codes can be evaluated with regard to to the our derived performance bounds in Section~\ref{Sec.Res}.
    \item \textcolor{blau_2b}{\textbf{Multi User}}: The extension of this study (point-to-point system) to multi-user scenarios (e.g., broadcast and multiple access channels) or multiple-input multiple-output channels may seems more relevant in applications of complex MC nano-networks within the future generation wireless networks (XG).
\end{itemize}
\appendix
\section{Proof of Lemma~\ref{Lem.DConv}}
\label{App.Converse}
In the following, we provide the proof of Lemma~\ref{Lem.DConv}. The method of proof is by contradiction, namely, we assume that the condition given in \eqref{Ineq.Converse_Lem} is violated and then we show that this leads to a contradiction, namely, sum of the type I and type II error probabilities converge to one, i.e., $\lim_{n\to\infty} \left[ P_{e,1}(i_1) + P_{e,2}(i_2,\mathbbmss{K}) \right] = 1$.
Fix $e_1$ and $e_2$. Let $\zeta,\eta,\mu,\pi> 0$ be arbitrarily small constants. Assume to the contrary that there exist two messages $i_1$ and $i_2$, where $i_1\neq i_2$, such that
\begin{align}
    \norm{\fc_{i_1} - \fc_{i_2}} < 2\sqrt{n\epsilon_n'} = \alpha_n \;,\,
    \label{Ineq.Negated_Assump}
\end{align}
where
\begin{align}
    \label{Eq.Alpha_nSlow}
    \alpha_n \equiv \frac{2\sqrt{A}}{n^{\frac{1}{2}(1+2(\kappa+b)}} \;.\,
\end{align}
Now let us define the following subsets
\begin{align}
    \mathbbmss{P}_{i_1,i_2} = \left\{ \fy \in \mathbbmss{T}_{i_1,\fg} \,:\; \norm{\fy - g\fc_{i_2}} \leq \sqrt{n\left( \sigma_Z^2 + \zeta \right)} \right\} \;,\,
    \label{Eq.Event_P}
\end{align}
\begin{align}
    \mathbbmss{Q}_{i_1,i_2} = \left\{ \fy \in \mathbb{Y}^n \,:\; \norm{\fy - g\fc_{i_2}} \leq \sqrt{n\left( \sigma_Z^2 + \zeta \right)} \right\} \;.\,
\label{Eq.Event_Q}
\end{align}
Then, observe that
\begin{align}
    \label{Eq.Error_Sum_1}
    P_{e,1}(i_1) + P_{e,2}(i_2,\mathbbmss{K}) & = \sup_{g\in\G} \left[ 1 - \int_{\sT_{\mathbbmss{K}}} f_{\fZ}(\fy-g\fc_{i_1}) \, d\fy \right]_{i_1 \in \mathbbmss{K}} + \sup_{g\in\G} \left[ \int_{\sT_{\mathbbmss{K}}} f_{\fZ}(\fy-g\fc_{i_2}) \, d\fy \right]_{i_2 \notin \mathbbmss{K}}
    \,.\,
\end{align}
Now consider the first integral in \eqref{Eq.Error_Sum_1} where for every $g\in\G$ we have,
\begin{align}
        \int_{\sT_{\mathbbmss{K}}} f_{\fZ}(\fy-g\fc_{i_1}) \, d\fy & \stackrel{(a)}{\leq} \int_{\underset{i_1\in \mathbbmss{K}}{\bigcup} \mathbbmss{T}_{i_1,g}} f_{\fZ}(\fy-g\fc_{i_1}) \, d\fy
        \nonumber\\&
        \stackrel{(a)}{=} \int_{\left( \underset{i_1\in \mathbbmss{K}}{\bigcup} \mathbbmss{T}_{i_1,g} \right) \cap \mathbbmss{P}_{i_1,i_2}} f_{\fZ}(\fy-g\fc_{i_1})\, \, d\fy + \int_{\left( \underset{i_1\in \mathbbmss{K}}{\bigcup} \mathbbmss{T}_{i_1,g} \right) \cap \mathbbmss{P}_{i_1,i_2}^c} f_{\fZ}(\fy-g\fc_{i_1})\, \, d\fy
        \nonumber\\&
        \stackrel{(b)}{\leq} \int_{\underset{i_1\in \mathbbmss{K}}{\bigcup} \left( \mathbbmss{T}_{i_1,g} \cap \mathbbmss{P}_{i_1,i_2} \right)} f_{\fZ}(\fy-g\fc_{i_1})\, \, d\fy + \int_{\underset{i_1\in \mathbbmss{K}}{\bigcup} \left( \mathbbmss{T}_{i_1,g} \cap \mathbbmss{P}_{i_1,i_2}^c \right)} f_{\fZ}(\fy-g\fc_{i_1})\, \, d\fy
        \nonumber\\
        & \stackrel{(c)}{\leq} \int_{\underset{i_1\in \mathbbmss{K}}{\bigcup} \mathbbmss{P}_{i_1,i_2}} f_{\fZ}(\fy-g\fc_{i_1})\, \, d\fy +  \int_{\underset{i_1\in \mathbbmss{K}}{\bigcup} \mathbbmss{Q}_{i_1,i_2}^c} f_{\fZ}(\fy-g\fc_{i_1}) d\fy \;,\,
        \label{Eq.LTP-T_i_1}
\end{align}
    where $(a)$ holds by the union bound, $(b)$ follows by the following
    \begin{subequations}
    \begin{align}
     \left( \underset{i_1\in \mathbbmss{K}}{\bigcup} \mathbbmss{T}_{i_1,g} \right) \cap \mathbbmss{P}_{i_1,i_2} \subset \underset{i_1\in \mathbbmss{K}}{\bigcup} \left( \mathbbmss{T}_{i_1,g} \cap \mathbbmss{P}_{i_1,i_2} \right) \;,\,
    \end{align}
    and
    \begin{align}
        \left( \underset{i_1\in \mathbbmss{K}}{\bigcup} \mathbbmss{T}_{i_1,g} \right) \cap \mathbbmss{P}_{i_1,i_2}^c \subset \underset{i_1\in \mathbbmss{K}}{\bigcup} \left( \mathbbmss{T}_{i_1,g} \cap \mathbbmss{P}_{i_1,i_2}^c \right) \;,\,
    \end{align}
    \end{subequations}
    and $(c)$ holds since
    \begin{align}
        \mathbbmss{Q}_{i_1,i_2}^c \supset \mathbbmss{T}_{i_1,g} \cap \mathbbmss{P}_{i_1,g}^c \;.\,
    \end{align}
Consider the second integral in \eqref{Eq.LTP-T_i_1}. Then, by the triangle inequality,
\begin{align}
    \norm{\fy-g\fc_{i,1}} & \geq
    \norm{\fy-g\fc_{i,2}}-\norm{g(\fc_{i,1}-\fc_{i,2})}
    \nonumber\\&
    = \norm{\fy-g\fc_{i,2}}-g\norm{\fc_{i,1}-\fc_{i,2}}
    \nonumber\\&
    > \sqrt{n(\sigma_Z^2+\zeta)}-g\norm{\fc_{i,1}-\fc_{i,2}}
    \nonumber\\&
    \geq\sqrt{n(\sigma_Z^2+\zeta)}-g\alpha_n \;.\,
\end{align}
    For sufficiently large $n$, this implies the following subset
        \begin{align}
           \mathbbmss{R}_{i_1,i_2}^c = \left\{y^n \in \mathbb{Y}^n \; : \, \norm{\fy - g\fc_{i,1}} > \sqrt{n\left( \sigma_Z^2 + \eta \right)} \right\} \;,\,
           \label{Eq.Regiong0_Slow}
        \end{align}
    for $\eta<\frac{\zeta}{2}$. That is,
    \begin{align}
        \left\{\fy \in \mathbb{Y}^n \; : \, \norm{\fy-g\fc_{i,2}} \geq
        \sqrt{n\left( \sigma_Z^2 + \zeta \right)} \right\} \quad \overset{\text{implies}}{\longrightarrow} \quad \left\{\fy \in \mathbb{Y}^n \; : \, \norm{\fy-g\fc_{i,1}} \geq
        \sqrt{n \left( \sigma_Z^2 + \eta \right)} \right\} \;.\,
    \end{align}
    Thus we deduce that
        \begin{align}
            \mathbbmss{R}_{i_1,i_2}^c \supset \mathbbmss{Q}_{i_1,i_2}^c \;,\,
        \end{align}
Hence, the second integral in the right hand side of \eqref{Eq.LTP-T_i_1} is bounded by
    \begin{align}
        \int_{\underset{i_1\in \mathbbmss{K}}{\bigcup} \mathbbmss{Q}_{i_1,i_2}^c} f_{\fZ}(\fy-g\fc_{i_1}) d\fy & \leq \int_{\underset{i_1\in \mathbbmss{K}}{\bigcup} \mathbbmss{R}_{i_1,i_2}^c} f_{\fZ}(\fy-g\fc_{i_1}) d\fy
        \nonumber\\&
        = \sum_{i_1 \in \mathbbmss{K}} \Pr\left( \norm{\fy-g\fc_{i,1}} \geq
        \sqrt{n \left( \sigma_Z^2 + \eta \right)} \right) 
        \nonumber\\
        & = K \cdot \Pr(\norm{\fZ}^2-n\sigma_Z^2>n\eta)
        \nonumber\\
        & \stackrel{(a)}{\leq} \frac{3\sigma_Z^4}{n^{1-\kappa} \eta^2}
        \nonumber\\
        & \leq \mu \;,\,
    \end{align}
    for sufficiently large $n$ with $\kappa \in [0,1)$, where $(a)$ holds by Chebyshev's inequality, followed by the substitution of $\fz\equiv \fy-g\fc_{i_1}$.
   Thus, by (\ref{Eq.LTP-T_i_1}),
    \begin{align}
        \label{Eq.ComplTypeISlow}
        \int_{\sT_{\mathbbmss{K}}} f_{\fZ}(\fy-g\fc_{i_1}) \, d\fy & \leq \int_{\underset{i_1\in \mathbbmss{K}}{\bigcup} \mathbbmss{T}_{i_1,g}} f_{\fZ}(\fy-g\fc_{i_1}) \, d\fy
        \nonumber\\&
        \leq \int_{\underset{i_1\in \mathbbmss{K}}{\bigcup} \mathbbmss{P}_{i_1,i_2}} f_{\fZ}(\fy-g\fc_{i_1}) \, d\fy + \mu \;.\,
    \end{align}
    Now, let us focus on the first integral in \eqref{Eq.LTP-T_i_1} with domain of $\mathbbmss{P}_{i_1,i_2}$, i.e., where
    \begin{align}
        \norm{\fy-g\fc_{i,2}}\leq\sqrt{n(\sigma_Z^2+\zeta)} \;.\,
        \label{Eq.ui2DistSlow}
    \end{align}
Observe that
    \begin{align}
        f_{\fZ}(\fy-g\fc_{i_1}) - f_{\fZ}(\fy-g\fc_{i_2})= f_{\fZ}(\fy-g\fc_{i_1})\left[1-e^{-\frac{1}{2\sigma_Z^2}\left(\norm{\fy-g\fc_{i_2}}^2-\norm{\fy-g\fc_{i_1}}^2\right)}\right] \;.\,
    \end{align}
    By the triangle inequality,
    \begin{align}
        \label{Ineq.triangleSlow}
        \norm{\fy-g\fc_{i_1}}\leq  \norm{\fy-g\fc_{i_2}} +  g\norm{\fc_{i_1} - \fc_{i_2}} \;.\,
    \end{align}
    Taking the square of both sides, we have
    \begin{align}
        \label{Ineq.Gaussian_Continuity}
        \norm{\fy-g\fc_{i_1}}^2 &\leq \norm{\fy-g\fc_{i_2}}^2 + g^2\norm{\fc_{i_2} - \fc_{i_1}}^2 +
        2\norm{\fy-g\fc_{i_2}}\cdot g\norm{\fc_{i_2} - \fc_{i_1}}
        \nonumber\\&
        \stackrel{(a)}{\leq} \norm{\fy-g\fc_{i_2}}^2+ g^2\alpha_n^2 + 2g\alpha_n\sqrt{n(\sigma_Z^2+\zeta)}
        \nonumber\\&
        \stackrel{(b)}{=} \norm{\fy-g\fc_{i_2}}^2 + \frac{4Ag^2}{n^{1+2(\kappa+b)}} +
        \frac{4g\sqrt{A(\sigma_Z^2+\zeta)}}{n^{\kappa + b}} \;,\,
    \end{align}
    where $(a)$ follows from (\ref{Ineq.Negated_Assump}) and (\ref{Eq.ui2DistSlow}), and $(b)$ holds by (\ref{Eq.Alpha_nSlow}). Now, in order to bound \eqref{Ineq.Gaussian_Continuity}, let us define,
    \begin{align}
        N_{\rm max} \overset{\text{\scriptsize def}}{=} 2\sigma_Z^2 \cdot \max \left( 4Ag^2 , 8g\sqrt{A(\sigma_Z^2+\zeta)} \right) \,.\,
    \end{align}
    Therefore, \eqref{Ineq.Gaussian_Continuity} is bounded as follows
    \begin{align}
        \label{Ineq.Norm_Squ_Diff}
        \norm{\fy-g\fc_{i_1}}^2-\norm{\fy-g\fc_{i_2}}^2 & \leq \frac{4Ag^2}{n^{1+2(\kappa+b)}} + \frac{4g\sqrt{A(\sigma_Z^2+\zeta)}}{n^{\kappa+b}}
        \nonumber\\&
        \leq \frac{2\sigma_Z^2 N_{\rm max}}{n^{\kappa+b}}
        \,,\,
    \end{align}
    where the last inequality holds since $n^{1+2(\kappa+b)} \geq n^{\kappa + b}$ for a given $\kappa$ and $b$, and every $n$. Now let us define
    \begin{align}
        \label{Eq.omega_n}
        \omega_n \overset{\text{\scriptsize def}}{=} \frac{N_{\rm max}}{n^{\kappa+b}} \,.\,
    \end{align}
    Then we employ inequality $1 - \frac{1}{x} \leq \ln x \,,\, \forall x > 0$ (\cite[Eq.~1]{Topse04}) by setting $x = \frac{1}{1-\omega_n}$ and provide an upper bound on $\omega_n$ as follows
    \begin{align}
        \omega_n & \leq \ln \left( \frac{1}{1 - \omega_n} \right)
        \nonumber\\&
        = \ln \left( \frac{n^{\kappa+b}}{n^{\kappa+b} - N_{\rm max}} \right)
        \,,\,
    \end{align}
    where conditions $x>0$ and $\omega_n < 1$ are fulfilled for sufficiently large $n$. Therefore by \eqref{Ineq.Norm_Squ_Diff} we obtain
    \begin{align}
        \norm{\fy-g\fc_{i_1}}^2-\norm{\fy-g\fc_{i_2}}^2 \leq 2\sigma_Z^2 \cdot \ln \left( \frac{n^{\kappa+b}}{n^{\kappa+b} - N_{\rm max}} \right)
        \,,\,
    \end{align}
    Hence,
    \begin{align}
        \label{Ineq.GaussianContinuitySlow}
        f_{\fZ}(\fy-g\fc_{i_1}) - f_{\fZ}(\fy-g\fc_{i_2}) & \leq f_{\fZ}(\fy-g\fc_{i_1}) \left(1-e^{-\frac{\omega_n}{2\sigma_Z^2}}\right)
        \nonumber\\&
        \leq f_{\fZ}\left(\fy-g\fc_{i_1}\right) \left(1-e^{- \ln \left( \frac{n^{\kappa+b}}{n^{\kappa+b} - N_{\rm max}} \right)} \right)
        \nonumber\\&
        \leq f_{\fZ}\left(\fy-g\fc_{i_1}\right) \left(1 - \frac{n^{\kappa+b} - N_{\rm max}}{n^{\kappa+b}} \right)
        \nonumber\\&
        \leq f_{\fZ}\left(\fy-g\fc_{i_1}\right) \cdot \frac{N_{\rm max}}{n^{\kappa+b}}
        \nonumber\\&
        = f_{\fZ}\left(\fy-g\fc_{i_1}\right) \cdot \omega_n \;,\,
    \end{align}
    Now we obtain,
\begin{align}
      \label{Ineq.Conv_1}
      e_1 + e_2 & \geq P_{e,1}(i_1) + P_{e,2}(i_2,\mathbbmss{K})
      \nonumber\\
      & \stackrel{(a)}{\geq} \sup_{g\in\G} \left[ P_{e,1}(i_1 | g) \right] + \sup_{g\in\G} \left[ P_{e,2}(i_2,\mathbbmss{K} | g) \right]
      \nonumber\\
      & \stackrel{(b)}{\geq} \sup_{g\in\G} \left[ P_{e,1}(i_1 | g) + P_{e,2}(i_2,\mathbbmss{K} | g) \right]
      \nonumber\\
      & \stackrel{(c)}{=} \sup_{g\in\G} \left[ 1 - \int_{\sT_{\mathbbmss{K}}} f_{\fZ}(\fy-g\fc_{i_1}) \, d \mathbf{y} + \int_{\sT_{\mathbbmss{K}}} f_{\fZ}(\fy-g\fc_{i_2}) \, d\fy \right]
\end{align}
where $(a)$ follows by \eqref{Eq.TypeIError-Sup} and \eqref{Eq.TypeIIError-Sup}, $(b)$ holds since supremum is sub-additive and $(c)$ is due to definitions of error in \eqref{Eq.TypeIError} and \eqref{Eq.TypeIIError}. Now we proceed to bound \eqref{Ineq.Conv_1} as follows
\begin{align}
    \label{Ineq.Conv_2}
    & \sup_{g\in\G} \left[ 1 - \int_{\sT_{\mathbbmss{K}}} f_{\fZ}(\fy-g\fc_{i_1}) \, d \mathbf{y} + \int_{\sT_{\mathbbmss{K}}} f_{\fZ}(\fy-g\fc_{i_2}) \, d\fy \right]
    \nonumber\\&
    \stackrel{(a)}{\geq} \sup_{g\in\G} \left[ 1 - \mu - \int_{\underset{i_1\in \mathbbmss{K}}{\bigcup} \mathbbmss{P}_{i_1,i_2}} f_{\fZ}(\fy-g\fc_{i_1}) \, d\fy +
    \int_{\underset{i_1\in \mathbbmss{K}}{\bigcup} \mathbbmss{T}_{i_1,g}} f_{\fZ}(\fy-g\fc_{i_2}) \, d\fy \right]
    \nonumber\\&
    \stackrel{(b)}{\geq} \sup_{g\in\G} \left[ 1 - \mu - \int_{\underset{i_1\in \mathbbmss{K}}{\bigcup} \mathbbmss{P}_{i_1,i_2}} f_{\fZ}(\fy-g\fc_{i_1}) \, d\fy + \int_{\underset{i_1\in \mathbbmss{K}}{\bigcup} \mathbbmss{P}_{i_1,i_2}} f_{\fZ}(\fy-g\fc_{i_2}) \, d\fy \right]
    \nonumber\\&
    \stackrel{(c)}{=} \sup_{g\in\G} \left[ 1 - \mu - \int_{\underset{i_1\in \mathbbmss{K}}{\bigcup} \mathbbmss{P}_{i_1,i_2}} \left[ f_{\fZ}(\fy-g\fc_{i_1}) - f_{\fZ}(\fy-g\fc_{i_2}) \right] \, d\fy \right]
    \nonumber\\&
\end{align}
where $(a)$ holds by \eqref{Eq.ComplTypeISlow} and $(b)$ follows from $\mathbbmss{P}_{i_1,i_2} \subset \mathbbmss{T}_{i_1,g}$. Now we proceed to bound \eqref{Ineq.Conv_2} as follows
\begin{align}
    \label{Ineq.Conv_3}
    & \sup_{g\in\G} \left[ 1 - \mu - \int_{\underset{i_1\in \mathbbmss{K}}{\bigcup} \mathbbmss{P}_{i_1,i_2}} \left[ f_{\fZ}(\fy-g\fc_{i_1}) - f_{\fZ}(\fy-g\fc_{i_2}) \right] \, d\fy \right]
    \nonumber\\&
    \stackrel{(a)}{\geq} \sup_{g\in\G} \left[ 1 - \mu - \omega_n \int_{\underset{i_1\in \mathbbmss{K}}{\bigcup} \mathbbmss{P}_{i_1,i_2}} f_{\fZ}(\fy-g\fc_{i_1}) \, d\fy \right]
    \nonumber\\&
    \stackrel{(b)}{\geq} \sup_{g\in\G} \left[ 1 - \mu - \omega_n \sum_{i_1 \in \mathbbmss{K}}
    \int_{\mathbbmss{P}_{i_1,i_2}}f_{\fZ} (\fy - g \fc_{i_1}) \, d\fy \right]
    \nonumber\\&
    \stackrel{(c)}{\geq} \sup_{g\in\G} \left[ 1 - \mu - \omega_n \cdot |\mathbbmss{K}|
    \right]
    \nonumber\\&
    \stackrel{(d)}{=} \sup_{g\in\G} \left[ 1 - \mu - \frac{KN_{\rm max}}{n^{b+\kappa}}
      \right]
      \nonumber\\&
    \stackrel{(e)}{\geq} \sup_{g\in\G} \left[ 1 - \mu - \pi \right]
    \nonumber\\&
    = 1 - 2\mu - \pi \,,\, 
\end{align}
where $(a)$ follows by \eqref{Ineq.GaussianContinuitySlow}, $(b)$ holds by the union bound, $(c)$ follows from
\begin{align}
    \int_{\mathbbmss{P}_{i_1,i_2}} f_{\fZ} ( \fy - g \fc_{i_1} ) \, d\fy & = \Pr\left( \norm{\fy - g\fc_{i_1}} \leq \sqrt{n\left( \sigma_Z^2 + \zeta \right)} \right) \leq 1 \,,\,
\end{align}
and $(c)$ follows since $|\mathbbmss{K}| = K = n^{\kappa}$, $(d)$ follows from \eqref{Eq.omega_n}, and $(e)$ holds since $\frac{KN_{\rm max}}{n^{b+\kappa}} = \frac{1}{n^b} \leq \pi$ for sufficiently large $n$. Thereby, recalling \eqref{Ineq.Conv_1},\eqref{Ineq.Conv_2},\eqref{Ineq.Conv_3} we obtain
\begin{align}
    e_1 + e_2 \geq 1 - 2\mu - \pi \,.\,
\end{align}

Clearly, this is a contradiction since the error probabilities tend to zero as $n\rightarrow\infty$. Thus, the assumption in (\ref{Ineq.Negated_Assump}) is false. This completes the proof of Lemma~\ref{Lem.DConv}.
\printendnotes
\bibliography{Lit}

\begin{thebibliography}{10}
\providecommand{\url}[1]{#1}
\csname url@samestyle\endcsname
\providecommand{\newblock}{\relax}
\providecommand{\bibinfo}[2]{#2}
\providecommand{\BIBentrySTDinterwordspacing}{\spaceskip=0pt\relax}
\providecommand{\BIBentryALTinterwordstretchfactor}{4}
\providecommand{\BIBentryALTinterwordspacing}{\spaceskip=\fontdimen2\font plus
\BIBentryALTinterwordstretchfactor\fontdimen3\font minus
  \fontdimen4\font\relax}
\providecommand{\BIBforeignlanguage}[2]{{%
\expandafter\ifx\csname l@#1\endcsname\relax
\typeout{** WARNING: IEEEtran.bst: No hyphenation pattern has been}%
\typeout{** loaded for the language `#1'. Using the pattern for}%
\typeout{** the default language instead.}%
\else
\language=\csname l@#1\endcsname
\fi
#2}}
\providecommand{\BIBdecl}{\relax}
\BIBdecl

\bibitem{6G+}
W.~Haselmayr, A.~Springer, G.~Fischer, C.~Alexiou, H.~Boche, P.~A. Hoeher,
  F.~Dressler, and R.~Schober, ``Integration of molecular communications into
  future generation wireless networks,'' in \emph{Proc. 1st 6G Wireless
  Summit., Levi, Finland}, 2019.

\bibitem{6G_PST}
J.~A. Cabrera, H.~Boche, C.~Deppe, R.~F. Schaefer, C.~Scheunert, and F.~H.
  Fitzek, ``6{G} and the {P}ost-{S}hannon {T}heory,'' in \emph{Shaping Future
  6G Networks: Needs, Impacts and Technologies}, N.~O. Frederiksen and
  H.~Gulliksen, Eds.\hskip 1em plus 0.5em minus 0.4em\relax Hoboken, New
  Jersey, United States: Wiley-Blackwell, 2021.

\bibitem{DHV17}
H.~S. {Dhillon}, H.~{Huang}, and H.~{Viswanathan}, ``Wide-area wireless
  communication challenges for the internet of things,'' \emph{IEEE Commun.
  Mag.}, vol.~55, no.~2, pp. 168--174, 2017.

\bibitem{LYHW18}
M.~{Li}, H.~{Yin}, Y.~{Huang}, and Y.~{Wang}, ``Impact of correlated fading
  channels on cognitive relay networks with generalized relay selection,''
  \emph{IEEE Access}, vol.~6, pp. 6040--6047, 2018.

\bibitem{TV05}
D.~Tse and P.~Viswanath, \emph{Fundamentals of wireless communication}.\hskip
  1em plus 0.5em minus 0.4em\relax Cambridge university press, 2005.

\bibitem{Biglieri98}
E.~Biglieri, J.~Proakis, and S.~Shamai, ``Fading channels:
  information-theoretic and communications aspects,'' \emph{IEEE Trans. Inf.
  Theory}, vol.~44, no.~6, pp. 2619--2692, 1998.

\bibitem{GV97}
A.~J. {Goldsmith} and P.~P. {Varaiya}, ``Capacity of fading channels with
  channel side information,'' \emph{IEEE Trans. Inf. Theory}, vol.~43, no.~6,
  pp. 1986--1992, 1997.

\bibitem{AF12}
A.~T. Asyhari and A.~G. F\`abregas, ``Nearest neighbor decoding in {MIMO}
  block-fading channels with imperfect csir,'' \emph{IEEE Trans. Info.
  Theory.}, vol.~58, no.~3, pp. 1483--1517, 2012.

\bibitem{S48}
C.~E. {Shannon}, ``A mathematical theory of communication,'' \emph{Bell Sys.
  Tech. J.}, vol.~27, no.~3, pp. 379--423, 1948.

\bibitem{AD89}
R.~{Ahlswede} and G.~{Dueck}, ``Identification via channels,'' \emph{IEEE
  Trans. Inf. Theory}, vol.~35, no.~1, pp. 15--29, 1989.

\bibitem{Salariseddigh22}
\BIBentryALTinterwordspacing
M.~J. Salariseddigh, U.~Pereg, H.~Boche, C.~Deppe, V.~Jamali, and R.~Schober,
  ``Deterministic identification for molecular communications over the
  {P}oisson channel,'' \emph{arXiv:2203.02784}, 2022. [Online]. Available:
  \url{https://arxiv.org/pdf/2203.02784.pdf}
\BIBentrySTDinterwordspacing

\bibitem{J85}
J.~J\'aJ\'a, ``Identification is easier than decoding,'' in \emph{Proc. Ann.
  Symp. Found. Comp. Scien.}, 1985, pp. 43--50.

\bibitem{Yao79}
A.~C. Yao, ``Some complexity questions related to distributive computing
  (preliminary report),'' in \emph{Proc. of the eleventh Ann. ACM Symp. Theory
  Comp.}, 1979, pp. 209--213.

\bibitem{LL89}
L.~Lov\'{a}sz, ``Communication complexity: A survey,'' in \emph{Paths, Flows
  and VLSI-Layout}, 1989, pp. 235--265.

\bibitem{AA20}
A.~Y. Anup~Rao, \emph{Communication Complexity: and Applications}.\hskip 1em
  plus 0.5em minus 0.4em\relax Cambridge University Press, 2020.

\bibitem{Salariseddigh_ICC}
\BIBentryALTinterwordspacing
M.~J. Salariseddigh, U.~Pereg, H.~Boche, and C.~Deppe, ``Deterministic
  identification over channels with power constraints,'' in \emph{Proc. IEEE
  Int. Conf. Commun., arXiv:2010.04239}, 2021, pp. 1--6. [Online]. Available:
  \url{http://arxiv.org/abs/2010.04239.pdf}
\BIBentrySTDinterwordspacing

\bibitem{Salariseddigh_IT}
------, ``Deterministic identification over channels with power constraints,''
  \emph{IEEE Trans. Inf. Theory}, vol.~68, no.~1, pp. 1--24, 2022.

\bibitem{Salariseddigh_ITW}
\BIBentryALTinterwordspacing
------, ``Deterministic identification over fading channels,'' in \emph{Proc.
  IEEE Inf. Theory Workshop, arXiv:2010.10010}, 2021, pp. 1--5. [Online].
  Available: \url{https://arxiv.org/pdf/2010.10010.pdf}
\BIBentrySTDinterwordspacing

\bibitem{Salariseddigh_GC_IEEE}
M.~J. Salariseddigh, U.~Pereg, H.~Boche, C.~Deppe, and R.~Schober,
  ``Deterministic identification over {P}oisson channels,'' in \emph{Proc. IEEE
  Global Comm. Conf., arXiv:2107.06061}, 2021, pp. 1--6.

\bibitem{AH08G}
R.~Ahlswede, ``General theory of information transfer: Updated,''
  \emph{Discrete Appl. Math.}, vol. 156, no.~9, pp. 1348--1388, 2008.

\bibitem{Salariseddigh_arXiv_ITW}
\BIBentryALTinterwordspacing
M.~J. Salariseddigh, U.~Pereg, H.~Boche, and C.~Deppe, ``Deterministic
  identification over fading channels,'' \emph{arXiv:2010.10010}, 2020.
  [Online]. Available: \url{https://arxiv.org/pdf/2010.10010.pdf}
\BIBentrySTDinterwordspacing

\bibitem{CHSN13}
J.~H. Conway and N.~J.~A. Sloane, \emph{Sphere {P}ackings, {L}attices and
  {G}roups}.\hskip 1em plus 0.5em minus 0.4em\relax Springer Science \&
  Business Media, 2013.

\bibitem{C10}
H.~Cohn, ``Order and disorder in energy minimization,'' in \emph{Proc. Int.
  Congr. Mathn.}\hskip 1em plus 0.5em minus 0.4em\relax World Scientific, 2010,
  pp. 2416--2443.

\bibitem{Salariseddigh22_2}
\BIBentryALTinterwordspacing
M.~J. Salariseddigh, V.~Jamali, U.~Pereg, H.~Boche, C.~Deppe, and R.~Schober,
  ``Deterministic identification for {MC} {ISI}-{P}oisson channel,''
  \emph{arXiv:2010.04239}, 2022. [Online]. Available:
  \url{http://arxiv.org/abs/2211.11024.pdf}
\BIBentrySTDinterwordspacing

\bibitem{Salariseddigh_GC_arXiv}
\BIBentryALTinterwordspacing
M.~J. Salariseddigh, U.~Pereg, H.~Boche, C.~Deppe, and R.~Schober,
  ``Deterministic identification over {P}oisson channels,''
  \emph{arXiv:2107.06061}, 2021. [Online]. Available:
  \url{http://arxiv.org/abs/2107.06061.pdf}
\BIBentrySTDinterwordspacing

\bibitem{Boche20}
H.~Boche, Y.~B{\"o}ck, and C.~Deppe, ``On effective convergence in {F}ekete's
  lemma and related combinatorial problems in information theory,''
  \emph{arXiv:2010.09896}, 2020.

\bibitem{A06}
R.~Ahlswede, ``On concepts of performance parameters for channels,'' in
  \emph{General Theory of Information Transfer and Combinatorics}.\hskip 1em
  plus 0.5em minus 0.4em\relax Springer, 2006, pp. 639--663.

\bibitem{Boche21}
H.~Boche and C.~Deppe, ``Computability of the channel reliability function and
  related bounds,'' \emph{arXiv:2101.09754}, 2021.

\bibitem{Topse04}
F.~Tops{\o}e, ``Some bounds for the logarithmic function,'' \emph{RGMIA Res.
  Rep. Collection}, vol.~7, no.~2, pp. 1--20, 2004.

\end{thebibliography}
\end{document}